\newtheorem{lem}{Lemma}[section]
\newtheorem{prop}{Proposition}[section]
\crefname{hypothesis}{Hypothesis}{Hypotheses}
\title{Light driven interactions in spatial predator-prey model with toxicant chemotaxis
%\title{Dynamics of light driven spatial Algae-Daphnia interaction as primitive evolutionary models with chemotaxis.
 \thanks{Submitted to the editors DATE.
\funding{This work was partially funded by NSF DMS-1615697}}}
\author{Akif Ibragimov\thanks{Department of Mathematics and Statistics, Texas Tech University, Lubbock, TX
  (\email{Akif.Ibraguimov@ttu.edu}).}
\and Angela Peace\thanks{Department of Mathematics and Statistics, Texas Tech University, Lubbock, TX
  (\email{A.Peace@ttu.edu}).}
}
\begin{document}

\maketitle

% REQUIRED
\begin{abstract}
We develop and analyze a spatial temporal model of light driven ecotoxicological processes, motivated by an aquatic predator-prey system of algae and \textsl{Daphnia} subject to a contaminant. Population dynamics are driven by light, which is periodic in time and varies with spatial depth. The existence and uniqueness of spatial and temporal dependent periodic solutions are shown and analytical functions of the solutions under parameter constraints are presented.  We conduct Turing stability analyses of solutions with respect to perturbations of initial conditions. Given a perturbation to a periodic equilibrium state, we show the system will return to this equilibrium state as long as motility is fast enough and/or the reservoir depth is shallow enough. Analytical results assume some Dirichlet boundary conditions that match the periodic equilibrium state, however numerical simulations with more relaxed boundary conditions capture similar periodic solutions. The work sheds light onto spatially dependent population dynamics that are driven by periodic forces, such as light levels. 
\end{abstract}

% REQUIRED
\begin{keywords}
periodic light, ecotoxicology, population dynamics, Turing stability
\end{keywords}

% REQUIRED
\begin{AMS}
  35A01, 35A02, 35A08, 92B05 
\end{AMS}

\section{Introduction}

Ecotoxicological modeling aims to better understand the fate of chemicals and their influence on population dynamics \cite{jorgensen2013modelling, huang2013model, peace2016somatic}. 
Spatial dynamics can play an important role in population dynamics \cite{rana2018mechanistically} and predicting how contaminants cycle through aquatic food webs, therefore spatially explicit modeling approaches have been proposed \cite{caswell1996demography, chaumot2003ecotoxicology}.  Additionally, temporal variations in light levels impacts organismal light history traits and behaviors.  Empirical evidence suggests that metabolic activity,  is dependent on diel cycles \cite{hu2018hard}.  Indeed, Ribalet et al. 2015 \cite{ribalet2015light} found that the cell production and mortality rates of aquatic primary producers are tightly synchronized to day/night cycles, where high growth but low mortality rates are observed during day light and low growth but high mortality rates are observed during the night. 

Here, we develop and analyze a spatially explicit model of an aquatic predator-prey system subject to contaminants with periodic light-driven population dynamics. We expand previous modeling efforts that assume constant light levels and neglect spatial dynamics \cite{huang2014development}. The extended model is a non-smooth periodic system of nonlinear partial differential equations. While our predator-prey system is generalizable we develop it for algae, a primary producer as the prey, and \textsl{Daphnia} a grazer for the predator.  In order to incorporate spatial movements, we assume dissolved toxicants follow diffusion, and the populations follow motility and chemotaxis.  Algae movement is driven by light availability, with light-oriented motility and \textsl{Dapnia} movement is driven by motility and chemotaxis towards their prey, as well as away from dissolved toxicants. For simplicity, we choose a simple periodic piecewise function for deil cycles. 

After formulating a full model, we make a number of simplifying assumptions to facilitate analytical model analyses. The results is a simplified three dimensional system of non-smooth periodic partial differential equations.  We identify depth-dependent and time-periodic solutions which we called the equilibrium state.  Assuming no flux boundary conditions for the sides of the domain, and Dirichlet boundary conditions on the top and bottom that match this interior time-periodic equilibrium state, we prove the existence and uniqueness of this bounded solution. We present analytical representations of the solutions satisfying given constraints on the parameters and provide stability analyses of these with respect to perturbations.  The presented Turing stability analyses is based on the Carath{\'e}odory principle. These analytical techniques are similar to our previous papers \cite{ibragimov2010stability, ibragimov2010stability2, yang2013computational} where we established stability criterion for biomedical chemotactic processes with distributed parameters for equilibria, which are constant in time and space.

When considering stability with respect to perturbations, we explore the impact of the size of the domain, as well as, the motility capabilities of the algae, \textsl{Daphnia}, and toxicant diffusion.  We show that, given a perturbation, the system will return to the time-periodic equilibrium state as long as motility  coefficients are large enough and/or the depth of the reservoir is small enough.  Furthermore, numerical simulations provide evidence of similar periodic solutions assuming complete no flux boundary conditions, suggesting that the Dirichlet boundary conditions we assumed in the analytical analyses do not drive the existence of these periodic solutions.

\section{Model formulation}
\label{sec:main}
\subsection{Base nonspatial model}
We start with the toxicant-mediated predator-prey model developed by Huang et al. 2014 \cite{huang2014development},  
\begin{subequations}
\begin{align}
%prey:
\frac{dU}{dt}&=\underbrace{b(C_U,U)U}_{\text{\parbox{2cm}{\centering gain from  \\[-4pt] growth}}} -\underbrace{d_1(C_U,U)U}_{\text{\parbox{2cm}{\centering loss from  \\[-4pt] death}}}-\underbrace{f(U)V}_{\text{\parbox{2cm}{\centering loss from  \\[-4pt] predation}}} \\
%predator:
\frac{dV}{dt}&=\underbrace{e(C_V, V)f(U)V}_{\text{\parbox{2cm}{\centering gain from  \\[-4pt] growth}}} -\underbrace{d_2(C_V, V)V}_{\text{\parbox{2cm}{\centering loss from  \\[-4pt] death}}}\\
%prey toxicant:
\frac{dC_U}{dt}&=\underbrace{a_1C_DU}_{\text{\parbox{2cm}{\centering toxicant\\[-4pt] uptake}}} -\underbrace{\sigma_1C_U}_{\text{\parbox{1.5cm}{\centering efflux  \\[-4pt] }}}-\underbrace{d_1(C_U, U)C_U}_{\text{\parbox{2cm}{\centering loss due to   \\[-4pt] death}}} -
\underbrace{f(U)V\frac{C_U}{U}}_{\text{\parbox{2cm}{\centering loss from  \\[-4pt] predation}}} \\
%predator toxicant:
\frac{dC_V}{dt}&=\underbrace{a_2C_DV}_{\text{\parbox{2cm}{\centering toxicant\\[-4pt] uptake}}} -\underbrace{\sigma_2C_V}_{\text{\parbox{1.5cm}{\centering efflux  \\[-4pt] }}}-\underbrace{d_2(C_V, V)C_V}_{\text{\parbox{2cm}{\centering loss due to   \\[-4pt] death}}} +\underbrace{\xi f(U)V\frac{C_U}{U}}_{\text{\parbox{2cm}{\centering gain from  \\[-4pt] predation}}}
\end{align}
\label{huangmodel}
\end{subequations}
where  $U$ and $V$ denote the densities of the prey (algae) and predator (\textsl{Daphnia}) populations respectively.   The contaminant is divided into three pools; contaminant in the prey population $C_U$, contaminant in the predator population $C_V$, and contaminant dissolved in the environment $C_D$. Here, $C_D$ is assumed constant. The fractions $\frac{C_U}{U}$  and $\frac{C_V}{V}$  give toxicant body burdens, or the the concentrations of the toxicant in the prey and predatory respectively. Predator functional response $f(U)$ is assumed to be a Holling type II functional response,
\begin{equation}\label{functionalresponse}
 f(U)=\frac{cU}{a+U}
\end{equation}
where $c$ is the predator's maximal ingestion rate and $a$ is the half saturation constant.  Growth terms $b(C_U, U)$ and $e(C_V, V)$,  and death terms $d_1(C_U,C)$, and $d_2(C_V, V)$  are functions that depend on the respective body burdens. The prey growth rate takes the following form,
\begin{equation}\label{b_u} 
b(C_U,U)=\frac{\alpha_1\max \{0,1-\alpha_2 \frac{C_U}{U} \} }{1+\alpha_3U}
\end{equation}
where $\alpha_1$ is the maximal growth rate, $\alpha_2$ is the effect of the toxicant on growth, and $\alpha_3$ accounts for crowding effects. The maximum function  $0\leq \max \{0,1-\alpha_2 \frac{C_U}{U} \} \leq 1$ comes from a linear dose response for the growth \cite{huang2013model, huang2014development}.  The predator growth rate is the product of a functional response with a toxicant-dependent conversion efficiency, $e(C_V, V)f(U)$. Another linear dose response is used to to represent the effect of the toxicant on the predator growth rate via the conversion efficiency, 
\begin{equation}
e(C_V,V)= \hat{e}\max \left\{ 0, 1-\beta_2\frac{C_V}{V} \right\}
\label{reproductionFunction}
\end{equation}
where $\hat{e}$ is the maximal conversion efficiency and $\beta_2$ is the effect of the toxicant on predator's growth. If the predator body burden, $\frac{C_V}{V}$, reaches the threshold $\frac{1}{\beta_2}$ the efficiency is zero and growth ceases. Huang et al. 2014 \cite{huang2014development} used the  power law to represent the relationship between toxicant concentrations organism mortality rate, as recommended by the committee on toxicology of the National Research Council in 1992 and tested in \cite{miller2000haber}. Mortality rates, as functions of the body burdens take the following form
\begin{equation} \label{mortality}
d_1(C_U,U)=h_1\left[\frac{C_U}{U}\right]^{I_1}+m_1 \qquad \text{and} \qquad d_2(C_V,V)=h_2\left[\frac{C_V}{V}\right]^{I_2}+m_2 
\end{equation}
where $m_1, m_2$ are natural mortality rates, and $h_1, h_2$ are positive coefficients and the powers of the body burden expressions are assumed to be one, $I_1=I_2=1$. 

\subsection{Spatial extension} 
Now, we extend the above model to incorporate spatial dynamics in the domain of interest $\Omega.$  
In this manuscript we assume the domain to be a simple rectangular parallelogram,  
$$\Omega=\Omega_0\times \{-U,0\},$$
with $\Omega_0=\{(x_2,x_3)\in \mathbb{R}^2\}.$
We use the following notation for the top boundary: $\Gamma^0=\Omega_0\times\{x_1=0\},$  bottom boundary: $\Gamma^H=\Omega_0\times\{x_1=-H\},$ and side boundary: $\Gamma_2=\partial \Omega_0 \times \{-H,0\}.$
It is customary to set $\Gamma_1=\Gamma^H\cup\Gamma^0.$ Then evidently $\partial \Omega = \Gamma_1\cup\Gamma_2.$

Often for convenience we denote $x_1=z$ for the depth coordinate.  
The spatial movement of the populations are described by motility and chemotaxis. The prey's movement is governed by light availability. We assume light-oriented motility, 
$$ \bigtriangledown(D_U(C_U, U,H)\cdot\bigtriangledown U)$$
with the matrix,
$$D_U (C_U,U,H)=\begin{pmatrix}D_1(C_U,U)&0&0\\0&D_1(C_U,U)&0\\0&0&D_1(H)
\end{pmatrix}$$
where function $D_1(H)$ decreases with $H,$ for example $D_1(H)\sim\frac{1}{1+H}$.  In addition to this diffusion, we assume the prey follows light-oriented locomotion, 
$$-v_0\frac{\partial U}{\partial z},$$
where the locomotion velocity $v_0$ is assumed to be a constant.  Here, $z$ represents the depth coordinate $x_1=z$. 

The predator's movement is determined by motility, as well as chemotaxis towards the prey and away from dissolved toxicants, 
$$\underbrace{\bigtriangledown(D_V(C_V,V)\bigtriangledown V)}_{\text{\parbox{2cm}{\centering Motility}}}-\underbrace{\bigtriangledown\left(\chi_U \frac{V}{U} \bigtriangledown U\right)}_{\text{\parbox{2cm}{\centering chemotaxis  \\[-4pt] towards prey}}}+\underbrace{\bigtriangledown\left(\chi_c V \bigtriangledown C_D\right)}_{\text{\parbox{2cm}{\centering chemotaxis  \\[-4pt] away from tox}}}$$
with constant coefficients $D_V, \chi_U$, and  $\chi_c$. 

Extending to this spatial framework allows us to incorporate the influence of depth-dependent light levels. In addition to variations with depth, we assume that light varies periodically with time due to day-night cycles. We define the function $h(z,t)$ to incorporate the influence of time and depth varying light levels on population dynamics:
\begin{equation}\label{h-def}
h(z,t) = \left\{
        \begin{array}{ll}
           e^{\omega t}\exp(\gamma z)& \quad \ 0\leq t\leq T\\
           e^{\omega(2- t)}\exp(\gamma z) & \quad T \leq t \leq \overline{T}
        \end{array}
    \right.
\end{equation}
where constant $\omega$ determines temporally light variations and constant $\gamma$ determines spatial light variations with depth.  For convenience, we will allow the period of the diel cycles to be equal to $\overline{T}=2$, with switching time in the piecewise function as $T=1$. 
%\begin{tcolorbox}
%From now on and further we will let period to be equal $\overline{T}=2$ and switching time $T=1$.
%\end{tcolorbox}

Since the prey is algae, a primary producer using photosynthesis to grow, we modify the prey's growth rate given above in Eq. \eqref{b_u} to be light-dependent and take the following form, 
\begin{equation}\label{newb_u} 
b(C_U,U, h(z,t))=\frac{\alpha_1h(z,t)\max \{0,1-\alpha_2 \frac{C_U}{U} \} }{1+\alpha_3U}. 
\end{equation}
This is now a toxicant (body-burden) and light dependent growth rate. 

We also modify the mortality rates to be dependent on light-levels. It has been shown that natural mortality rates tend to vary with light levels, with low moralities rates occurring during the daytime and higher mortality rates during the night \cite{ribalet2015light}. We modify the natural mortality constants given above in Eq. \eqref{mortality} to be light-dependent functions, 

\begin{subequations}
\begin{align}\label{d_1_2}
d_1(C_U,U, h(z,t))&=h_1\left[\frac{C_U}{U}\right]^{I_1}+m_1(h(z,t)) \\
d_2(C_V,V, h(z,t))&=h_2\left[\frac{C_V}{V}\right]^{I_2}+m_2(h(z,t)).
\end{align}
\label{newmortality}
\end{subequations}
Here,  
\begin{subequations} \label{newmortality_piecewise}
\begin{align}\label{m_1_2}
m_1(h(z,t)) &= \widehat{m}_1  \frac{h(z,t)}{1+\beta h(z,t)}+\mathcal{M}_1(t)\\
%\left\{
       % \begin{array}{ll}
          %\widehat{m}_1  \frac{h(z,t)}{1+\beta h(z,t)}+\mathcal{M}_1(t) ,& \quad 0\leq t\leq 1 ;\\
         %\widehat{m}_1  \frac{h(z,t)}{1+\beta h(z,t)}+\mathcal{M}_1(t) ,& \quad 1\leq t\leq 2 .
        %\end{array}
    %\right. \\
m_2(h(z,t)) &= \widehat{m}_2  \frac{h(z,t)}{1+\beta h(z,t)}+\mathcal{M}_2(t),
% \left\{
%        \begin{array}{ll}
%          \widehat{m}_2  \frac{h(z,t)}{1+\beta h(z,t)}+\mathcal{M}_2(t) ,& \quad 0\leq t\leq 1 ;\\
%          \widehat{m}_2  \frac{h(z,t)}{1+\beta h(z,t)}+\mathcal{M}_2(t) ,& \quad 1\leq t\leq 2 .
%        \end{array}
%    \right.
\end{align}
\end{subequations}
where
\begin{equation}\label{m_1^0-def}
\mathcal{M}_i(t) = \left\{
        \begin{array}{ll}
          \overline{m}_i-\omega ,& \quad 0\leq t\leq 1 ;\\
         \overline{m}_i+\omega ,& \quad 1\leq t\leq 2,
        \end{array}
    \right.
\end{equation}
and $\widehat{m}_1$ and $\widehat{m}_2$ are coefficients for maximum mortalities of the prey and predator respectively, and $\overline{m}_1$ and $\overline{m}_2$ are baseline values with respect to parameter $\omega$, which switches sign in the above expressions depending on the cycle for light availability.  
\begin{rem}\label{omega}
It is important to state that  parameter $\omega$ in the above equation \eqref{m_1^0-def}, may not be exactly the same as it is in the definition of the intensity function $h(z,t)$, equation \eqref{h-def}, but these values are likely related. The biological relationship between these two parameters is left as future work. The mathematical implications of our assumptions here are also left as future work, but we note that these can be investigated using the concept of $\it{differential \ inclusion}$, which was first systematically studied by by A.F. Filippov \cite{filippov2013differential}.  
\end{rem}

Putting all these assumptions together, the full model can be written in the following form: 
\begin{subequations} \label{fullmodel}
\begin{align}
\label{prey-eq}
\frac{\partial U}{\partial t}&=\underbrace{b(C_U,U, h(z,t))U}_{\text{\parbox{2cm}{\centering gain from  \\[-4pt] growth}}} -\underbrace{d_1(C_U,U,h(z,t))U}_{\text{\parbox{2cm}{\centering loss from  \\[-4pt] death}}}-\underbrace{f(U)V}_{\text{\parbox{0.75cm}{\centering pred  \\[-4pt] loss}}}\\ \nonumber &\hspace{5cm}+ \underbrace{\bigtriangledown(D_U(C_U, U, H)\cdot\bigtriangledown U)}_{\text{\parbox{2cm}{\centering Motility}}} \underbrace{-v_0\frac{\partial U}{\partial z}}_{\text{\parbox{1.75cm}{\centering Light-oriented locomotion}}}\\
\label{predator-eq}
\frac{\partial V}{\partial t} &=\underbrace{e(C_V,V)f(U)V}_{\text{\parbox{2cm}{\centering gain from  \\[-4pt] growth}}} -\underbrace{d_2(C_V,V)V}_{\text{\parbox{2cm}{\centering loss from  \\[-4pt] death}}}+\underbrace{\bigtriangledown(D_V(C_V,V)\bigtriangledown V)}_{\text{\parbox{2cm}{\centering Motility}}}\\ \nonumber &\hspace{5cm}-\underbrace{\bigtriangledown(\chi_U \frac{V}{U} \bigtriangledown U)}_{\text{\parbox{2cm}{\centering chemotaxis  \\[-4pt] towards prey}}}+\underbrace{\bigtriangledown\left(\chi_c V \bigtriangledown C_D\right)}_{\text{\parbox{2cm}{\centering chemotaxis  \\[-4pt] away from tox}}}  \\
\label{prey-tox-eq}
\frac{\partial C_U}{\partial t}&=\underbrace{a_1C_DU}_{\text{\parbox{2cm}{\centering toxicant\\[-4pt] uptake}}} -\underbrace{\sigma_1C_U}_{\text{\parbox{2cm}{\centering efflux  \\[-4pt] }}}-\underbrace{d_1(C_U,U)C_U}_{\text{\parbox{2cm}{\centering loss due to   \\[-4pt] death}}} -
\underbrace{f(U)V\frac{C_U}{U}}_{\text{\parbox{2cm}{\centering loss from  \\[-4pt] predation}}} \\
\label{pred-tox-eq}
\frac{\partial C_V}{\partial t}&=\underbrace{a_2C_DV}_{\text{\parbox{2cm}{\centering toxicant\\[-4pt] uptake}}} -\underbrace{\sigma_2C_V}_{\text{\parbox{2cm}{\centering efflux  \\[-4pt] }}}-\underbrace{d_2(C_V,V)C_V}_{\text{\parbox{2cm}{\centering loss due to   \\[-4pt] death}}} +\underbrace{f(U)V\frac{C_U}{U}}_{\text{\parbox{2cm}{\centering gain from  \\[-4pt] predation}}}\\
\label{Dissolv-tox-eq}
\frac{\partial C_D}{\partial t}&=\underbrace{\bigtriangledown(D_C\bigtriangledown C_D)}_{\text{\parbox{2cm}{\centering Diffusion}}}-\underbrace{\left(a_1U+a_2V\right) C_D}_{\text{\parbox{2cm}{\centering uptake from \\[-4pt] prey \& predator}}}+
\underbrace{\sigma_1C_U+\sigma_2C_V}_{\text{\parbox{2cm}{\centering efflux from \\[-4pt] prey \& predator}}}\\ \nonumber &\hspace{5cm}+\underbrace{d_1(C_U,U)C_U}_{\text{\parbox{2cm}{\centering prey \\[-4pt] death}}}+\underbrace{d_2(C_V,V)C_V}_{\text{\parbox{2cm}{\centering predator  \\[-4pt] death}}} %\pm \underbrace{\alpha C_D}_{\text{\parbox{0.75cm}{\centering sink/ \\[-4pt] source}}}
 \end{align}
\end{subequations}

\subsection{Model simplification}
To facilitate model analysis for this manuscript, we now make a number of simplifying assumptions on the full model above. Let $C$ denote the total amount of toxicant in the system, which is divided into three pools: toxicant in the prey, toxicant in the predator, and dissolved toxicant in the medium, and therefore can be written as $C=C_U+C_V+C_D$.  While the division of toxicant across these pools depends on uptake and efflux rates, as well as population growth and death dynamics, we initially assume these are fixed proportions with the following major simplifying assumption. 
\begin{assumption}\label{assum_fixedC}
The total amount of toxicant in the system $C$, is divided between three pools with fixed proportions: toxicant inside the prey population ($\phi_1C$), toxicant inside the predator population ($\phi_2C$), and dissolved toxicant in the medium ($\phi_3C$). Where $1=\phi_1+\phi_2+\phi_3$. 
\end{assumption}
 The above assumption allows us to reduce the full model \eqref{fullmodel} down to three dimensions. Next we make simplifying assumptions on the expressions for growth and death, as well as chemo-tactic and diffusions coefficients. 

\begin{assumption}\label{assum_bu} 
The prey growth rate is not affected by the toxicant, $\alpha_2=0$. Crowding affects prey growth such that $\alpha_3=1$. We denote the maximal prey growth rate $\alpha_1$ as $\alpha$ so the prey growth expression in Eq. \eqref{newb_u} can be simplified to $$b(C_U,U, h(z,t))=\frac{\alpha h(z,t)}{1+U}.$$ 
\end{assumption}

\begin{assumption}\label{assum_ev}
 The predator functional response is scaled with half saturation constant $a=1$ and maximum ingestion rate $c$ denoted as $c_0$ such that Eq. \eqref{functionalresponse} is simplified to $f(U)=\frac{c_0U}{1+U}.$
The predator growth rate is not affected by the toxicant, $\beta_2=0$, and the conversion efficiency $\hat{e}$ is scaled to incorporate $c_0$ such that the expression for predator growth rate can be written as 
$\hat{e}\frac{U}{1+U}$.
\end{assumption}

\begin{assumption}\label{assum_death}
Positive coefficients and powers in the the toxicant dependent death rates are set as $h_1=h_2=I_1=I_2=1$, so the death rates in Eq. \eqref{newmortality} can be expressed as
\begin{align*}
d_1(C_U,U, h(z,t))&=\frac{C}{U}+m_1(h(z,t)) \\
d_2(C_V,V, h(z,t))&=\frac{C}{V}+m_2(h(z,t))
\end{align*}
where $m_1$ and $m_2$ are piecewise expressions defined in Eq. \eqref{newmortality_piecewise}. 
\end{assumption}

\begin{assumption}\label{assum_diff}
Diffusions coefficients are constants denoted as $D_1(C,U)=D_1$, $D_2(C,V)=D_2$, and $D_3(C)=D_3.$ The chemotactic coefficient for dissolved toxicant is $\chi_c=0$. 
\end{assumption}

The above simplifying assumptions allows us to reduce the full model \eqref{fullmodel} to the following three-dimensional non-linear model 
\begin{align}\label{prey-eq-1-mod-F1}
\frac{\partial U}{\partial t}&-\big[\nabla\left(D_1\cdot\nabla U\right) -v_0\partial_z U  \big] \\ 
&=\frac{h(z,t)}{1+\beta h(z,t)}\frac{U}{1+U}\left[ (\alpha-\widehat{m}_1)+\alpha\beta h(z,t)-\widehat{m}_1U\right]-\mathcal{M}_1(t)U\\ \nonumber &\hspace{2cm}-\frac{C}{U}U-c_0\frac{V}{1+U}U \nonumber \\
&=F_1(U,V,C)\cdot U \nonumber\\
\label{pred-eq-1-mod-F2}
\frac{\partial V}{\partial t}&-\left[\nabla\left(D_2\cdot\nabla V\right) - \nabla \left( \chi\frac{V}{U}\cdot \nabla U\right)\right] \\
&=\left[ \hat{e}\frac{U}{1+U}-\widehat{m}_2\frac{h(z,t)}{1+\beta h(z,t)}\right]V -\mathcal{M}_2(t)V -\frac{C}{V}V\nonumber\\
&=F_2(U,V,C)V\nonumber\\
\label{Desoled-tox-mod-F3}
\frac{\partial C}{\partial t}&-\nabla\left(D_C\cdot\nabla C\right)\\
&=\bigg[-\left(\left(a_1-\phi_1\right)U+\left(a_2-\phi_2\right)V\right)+ \frac{C}{U}+ \frac{C}{V}+[\widehat{m}_1+\widehat{m}_2]\cdot\frac{h(z,t)}{1+\beta h(z,t)} \\ \nonumber &\hspace{2cm}+\mathcal{M}_1(t)+\mathcal{M}_2(t) -a_0\bigg]\cdot C\nonumber\\
&=F_3(U,V,C)C\nonumber
\end{align}

For linearization, it is convenient to explicitly derive all the non-linear terms in the above system as the following:
%For convenience of the linearization  let once more explicitly  derive all non-linear terms in above system 
%\begin{tcolorbox}
\begin{align}\label{F1}
F_1(U,V,C)\cdot U &=\bigg[\frac{h(z,t)}{1+\beta h(z,t)}\frac{1}{1+U}\left[ (\alpha-\widehat{m}_1)+\alpha\beta h(z,t)-\widehat{m}_1U\right] \\ \nonumber &\hspace{3cm}-\mathcal{M}_1-\frac{C}{U}-c_0\frac{V}{1+U}\bigg ]\cdot U \nonumber \\
\label{F2}
F_2(U,V,C)\cdot V &=\left[\left[ \hat{e}\frac{U}{1+U}-\widehat{m}_2\frac{h(z,t)}{1+\beta h(z,t)}\right] -\mathcal{M}_2 -\frac{C}{V}\right]\cdot V\\
\label{F3}
F_3(U,V,C)\cdot C&=\bigg[-\left(a_1U+a_2V\right)+ \frac{C}{U}+ \frac{C}{V}+[\widehat{m}_1+\widehat{m}_2]\cdot\frac{h(z,t)}{1+\beta h(z,t)}\\ \nonumber &\hspace{3cm}+\mathcal{M}_1+ \mathcal{M}_2 -a_0\bigg]\cdot C\nonumber\\
\label{Chi}
X(U,V,\nabla U)&=\frac{V}{U}\cdot \nabla U
\end{align}
%\end{tcolorbox}
The model can then be written in terms of these non-linear functions:
%Let us once more rewrite main system of 3 equation in term of the introduced nonlinear function
%\begin{tcolorbox}
\begin{align}\label{prey-eq-1-mod-F11}
&\frac{\partial U}{\partial t} -\big[\nabla\left(D_1\cdot\nabla U\right) -v_0\partial_z U  \big]=F_1(U,V,C)\cdot U\\
\label{pred-eq-1-mod-F22}
&\frac{\partial V}{\partial t}- \left[\nabla\left(D_2\cdot\nabla V\right) - \chi\nabla \cdot \left( X(U,V,\nabla U)\right)\right]=F_2(U,V,C)V\\
\label{Desoled-tox-mod-F33}
&\frac{\partial C}{\partial t}-\nabla\left(D_3\cdot\nabla C\right) =F_3(U,V,C)\cdot C
\end{align}
%\end{tcolorbox}

To this end we assume that
\begin{enumerate}\label{clasical-solution}
\item

Functions  $U(x,t), \ V(x,t), \ C(x,t)$ are twice differentiate in spacial coordinates in domain $\Omega$ for all $t>0$ and  has one derivative on the union of open time intervals $(j,j+1),$ $j\ge 0$   :

$$C_{\vec{x},t}^{2,1}\left(\Omega\times\cup_{j=1} (j,j+1)\right)$$ 

and satisfy system   \eqref{prey-eq-1-mod-F1},-\eqref{Desoled-tox-mod-F3} point wise for all $t\in \cup_{j=1} (j,j+1).$ 
\item 

In addition we will assume that all functions are in 
$C_{x,t}^{1,0}(\overline{\Omega})\times [0,\infty)$ 

\end{enumerate}

\subsection{Boundary Conditions and main assumptions} 
Let the solution of the system \eqref{prey-eq-1-mod-F1}-\eqref{Desoled-tox-mod-F3} be denoted as $\vec{U}(x,t)= (U,V,C)^T$.
 %Let $\vec{U}(x,t)= (U,V,C)^T$ denote the solution of the system \eqref{prey-eq-1-mod-F1}-\eqref{Desoled-tox-mod-F3}.
On the side  $\Gamma_1$ of the boundary of the domain $\Omega$, we assume zero influx conditions for all dependent variables for all time $t>0$, 
\begin{equation}\label{BC-neuman}
\frac{\partial\vec{U}}{\partial \nu}=0. 
\end{equation} 

On the bottom and the top boundaries,  $\Gamma^H$ and $\Gamma^0$,  we assume Dirichlet conditions  (prescribed distribution of dependent variables $U,V, \ \text{and} \ C,$
\begin{align}
&\vec{U^H}(x,t)= (U^H(x,t),V^H(x,t),C^H(x,t)) \text{ -- on the bottom} \\
&\vec{U^0}(x,t)= (U^0(x,t),V^0(x,t),C^0(x,t)) \text{ -- on the top.} 
\end{align}\label{BC-dirichlet}
 
Here $U^{H,0}, \ V^{H,0}, \text{and} \ C^{H,0}$  are given functions which match the distribution of the periodic equilibrium state on the boundary $\Gamma^{H,0}$ , and $\nu$ is the outward normal to the boundary of the domain $\Omega_0$.  Let  $\vec{U}(x,t)$ be a classical  solution of the system \eqref{prey-eq-1-mod-F1},-\eqref{Desoled-tox-mod-F3}, with boundary conditions  \eqref{BC-neuman}-\eqref{BC-dirichlet} that satisfies initial condition $\vec{U}(x,z,0)=\vec{\Phi}(x).$ Again, here we denote $x=(x_1,x_2)$ and the height coordinate $x_3$ is set to be $z$.

\section{Equilibrium solutions}
\label{sec:equilibrium}

In general we define the  \textbf{equilibrium state $\vec{u}_e=\vec{u}_e(z,t)$} (``base line solution") to be a depth depending function of  $z$, and time  $t$. Below, we highlight several important features of this solution.
Let $\alpha_1=\alpha,$ $\alpha_3=1$ and $C=0$, then the base line solution for algae and daphnia depending on depth and time satisfies the following system of two equations with Dirichlet condition:

%\begin{tcolorbox}
\begin{align}\label{prey-eq-1-mod}
\frac{\partial u_e}{\partial t}& -\big[\partial_z\left(D_1\cdot\partial_z u_e\right) -v_0\partial_z u_e -\mathcal{M}_1(t)u_e  \big]\\
&=\frac{h(z,t)}{1+\beta h(z,t)}\frac{u_e}{1+u_e}\big[ (\alpha-\widehat{m}_1)+\alpha\beta h(z,t)-\widehat{m}_1u_e\big]-c_0\frac{u_e}{1+u_e}v_e \nonumber, \\
\label{pred-eq-1-mod}
\frac{\partial v_e}{\partial t} &- \left[\partial_z\left(D_2\cdot\partial_z v_e\right) - \partial_z \left( \chi\frac{v_e}{u_e}\cdot \partial_z u_e\right)-\mathcal{M}_2(t)v_e \right] \\
&=\hat{e}\frac{u_e}{1+u_e}v_e-\widehat{m}_2\frac{h(z,t)}{1+\beta h(z,t)}v_e(z,t).\nonumber
\end{align}
%\end{tcolorbox}
Let the solution of the above system satisfy Dirichlet conditions:

\begin{enumerate}
\item 
\begin{equation} \label{dir_up}
u_0(0,t)= k_0e^{\omega t},\ \text{and} \ u_0(-H,t)=k_0(H)e^{\omega t}\exp(- \gamma H)
\end{equation}
\item 
\begin{equation}\label{dir_down} 
v_0(0,t)= k_1e^{\omega t}, \ \text{and} \ v_0(-H,t)=k_1(H)e^{\omega t}\exp(-\gamma H)
\end{equation}
\end{enumerate}

Here, our strategy to analyze the equilibrium state is as follows: 
1.) In section \ref{uniqness} we prove the uniqueness and stability of a positive and bounded solution of system \eqref{prey-eq-1-mod}-\eqref{pred-eq-1-mod}, where we assume a constraint on one of the solutions variables (described in Theorem \ref{uniq-eqv}). Then, in section \ref{sec:analytical} we construct analytical functions of the equilibria state assuming specific value for constants $k_1$ and $k_2$, that depend on the coefficients of the equation. In section \ref{sec:stability} we investigate the Turing stability of the constrained solution with respect to perturbations of the initial conditions. Finally, in section \ref{sims} we show that numerical solutions of the system with a similar pattern exists under much more general boundary and initial conditions.

\subsection{Uniqness of the positive and bounded solution}\label{uniqness}
Most of the theoretical results are performed on a single time-interval within a diel cycle, without switching the parameters values as defined in the piecewise function for periodic light levels, Eq. \eqref{h-def}.  For notational convenience we denote the switch time generically as $T$ and the period to be $\overline{T}$ and then consider the intervals $[0, T]$ to $[T, \overline{T}]$ separately.  In the equations, switching from ``day to night" occur only in terms $\mathcal{M}_i(t)$ as periodic step functions with time dependent switching and variations modeled by %in form of switching of value depending on  time interval, and frequency rate in light penetration in time modeled by
 $\pm\omega$, see equation \eqref{m_1^0-def}. The constructed solution is globally continuous and has all the necessary derivatives in time and space on the open time intervals $((i-1)T,iT),\ i\in \mathbb{N}$ and is unique for all time. The constructed analytical solution is increasing then decreasing on the consequent time intervals $((i-1)T,iT]$,$(iT,(i+1)T]$ for $i=1,2,3,..$. 

Throughout our analyses of the equilibrium state, we consider different cases for time and space dependence, sections \ref{Case 1} - \ref{Case 4}.  Naturally, the baseline solutions of the equilibrium state do not depend on initial data and are sliding along the time axis w.r.t. light availability. We define $\vec{u}$ as positive if each component of the vector is positive.  First we prove the maximum principle for equation \eqref{prey-eq-1-mod}. 

\begin{lem}\label{max-prin}
Assume $u(z,t)$ is a positive solution for equation \eqref{prey-eq-1-mod} in the domain $D=[-H,0]\times[0,T]$, and $l\geq v_e(z,t) \ge 0.$ Assume that $u(x,t)\geq q >0$ on the $\Gamma (D)$-parabolic boundary of the domain $D$. Then

\begin{enumerate}
\item 
There exists a constant $q_0$ depending on $T$, $l$, and $q$ and coefficients of the equation \eqref{prey-eq-1-mod} s.t. for all $(z,t)\in [-H,0]\times[0,T]$:
\begin{equation}\label{u>q0}
u(z,t)\geq e^{-\delta_0 t}q_0.
\end{equation}
Here $q_0=\min_{\Gamma(D)} u(x,t)>0,$ and $\delta_0=\widehat{m}_1\frac{\max h(z,t)}{1+\beta \min h(z,t)}+\mathcal{M}_1+cl$. 
\item Assume that in the time space domain $[-H,0]\times[0,T]$ we have 
\begin{equation}\label{delta1}
\delta_1= \alpha-\left(\mathcal{M}_1+\widehat{m}_1 \frac{h(z,t)}{1+\beta h(z,t)}\right)\ge 0,
\end{equation} 
then 
\begin{equation}\label{u<q1}
u(z,t) \le q_1e^{\delta_1 t}.
\end{equation}
Here $q_1=\max_{\Gamma(D)} u(x,t)\le C<\infty.$
\end{enumerate}
%In the same time from proof of the lemma follows that   for any given $l>v_e$ and $\epsilon>0$ exist $T=T_0$ s.t. constant $q_0\ge 1-\epsilon $ is $l$ independent.
\end{lem}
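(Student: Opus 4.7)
The plan is to reduce both parts to the standard weak maximum/minimum principle for a linear parabolic operator by exponential rescaling. First I would rewrite the right-hand side of \eqref{prey-eq-1-mod} with $u$ factored out, identifying a reaction coefficient $K(z,t,u,v_e)$ so that the equation takes the form
\[
\partial_t u - \partial_z(D_1\,\partial_z u) + v_0\,\partial_z u \;=\; K(z,t,u,v_e)\,u.
\]
A short algebraic simplification of the factor $\tfrac{h}{(1+\beta h)(1+u)}[(\alpha-\widehat m_1)+\alpha\beta h - \widehat m_1 u]$ collapses it to $\tfrac{\alpha h}{1+u} - \tfrac{\widehat m_1 h}{1+\beta h}$, giving
\[
K(z,t,u,v_e) \;=\; -\mathcal{M}_1(t) + \frac{\alpha h(z,t)}{1+u} - \widehat m_1\,\frac{h(z,t)}{1+\beta h(z,t)} - \frac{c_0\,v_e}{1+u}.
\]

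For part 1, I would bound $K$ from below uniformly on $D$. Using $0\le v_e\le l$ and $u\ge 0$, the only terms that can make $K$ very negative are $-\mathcal{M}_1$, the mortality term (which is maximized by taking $\max h$ in the numerator and $\min h$ in the denominator), and $-\tfrac{c_0 v_e}{1+u}\ge -c_0 l$; dropping the nonnegative growth term $\tfrac{\alpha h}{1+u}$ then yields $K\ge -\delta_0$ with $\delta_0$ exactly as stated. Now set $w = u\,e^{\delta_0 t}$. Direct differentiation gives
\[
\partial_t w - \partial_z(D_1\,\partial_z w) + v_0\,\partial_z w \;=\; (K+\delta_0)\,w \;\ge\; 0,
\]
a linear parabolic inequality with coefficients that are bounded on $D$ (by Cara\-th\'eodory-type regularity on the open subintervals). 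The weak minimum principle for the operator $\partial_t - \partial_z(D_1\partial_z) + v_0\partial_z$ then forces $w$ to attain its minimum on the parabolic boundary $\Gamma(D)$, where $w = u\,e^{\delta_0 t}\ge q_0$. Hence $u\,e^{\delta_0 t}\ge q_0$ throughout $D$, which is \eqref{u>q0}.

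For part 2, I would proceed symmetrically: bound $K$ from above by $\delta_1(z,t)$. Dropping the nonpositive term $-\tfrac{c_0 v_e}{1+u}$ and using the algebraic inequality $\tfrac{\alpha h}{1+u}\le\alpha$ (which is where the sign/normalization of $h$ enters, the $h$-factor being at most unity on $D$ under the imposed parameter conventions, so that $h\le 1\le 1+u$) gives $K\le \delta_1(z,t)$. Taking $\bar\delta_1 = \sup_D \delta_1$, the hypothesis $\delta_1\ge 0$ makes $\bar\delta_1\ge 0$ well defined. Setting $w = u\,e^{-\bar\delta_1 t}$ produces the reverse linear inequality
\[
\partial_t w - \partial_z(D_1\,\partial_z w) + v_0\,\partial_z w \;=\; (K-\bar\delta_1)\,w \;\le\; 0,
\]
so the weak maximum principle yields $w\le \max_{\Gamma(D)} w = q_1$, i.e.\ \eqref{u<q1}.

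The main obstacle I anticipate is honest verification of the upper algebraic bound $\tfrac{\alpha h}{1+u}\le \alpha$: without a standing bound $h\le 1$ this inequality is false near $u=0$, so one must either invoke the part-1 lower bound $u\ge q_0 e^{-\delta_0 t}$ to absorb the shortfall (which would change $\delta_1$ to depend on $q_0$, $l$, and $T$) or argue that the normalization of $h(z,t) = e^{\omega t}\exp(\gamma z)$ on $D=[-H,0]\times[0,T]$ forces $h\le 1$ and thus legitimizes the stated $\delta_1$. A secondary care point is that $\mathcal{M}_1(t)$ is only piecewise constant across the diel switch, so the maximum principle is applied on each open subinterval $(0,T)$ separately and then glued using the continuity assumed in \eqref{clasical-solution}.
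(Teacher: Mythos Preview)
Your approach is essentially identical to the paper's: both introduce the exponential rescaling $\tilde u=e^{\delta t}u$, compute that the resulting parabolic operator acting on $\tilde u$ has sign-definite right-hand side $e^{\delta t}u\bigl[\tfrac{\alpha h}{1+u}-\widehat m_1\tfrac{h}{1+\beta h}-\mathcal{M}_1-\tfrac{c_0 v}{1+u}+\delta\bigr]$, and then invoke the weak maximum/minimum principle on the parabolic boundary. Your algebraic collapse of the bracket to $\tfrac{\alpha h}{1+u}-\tfrac{\widehat m_1 h}{1+\beta h}$ is exactly the factorization the paper uses implicitly in its display \eqref{[1]}, and your lower bound $K\ge -\delta_0$ for part~1 matches the paper's bound line for line.

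Your flagged obstacle in part~2 is genuine and, in fact, is glossed over in the paper as well: the paper simply writes ``similarly taking $\delta=-\delta_1$'' without checking that $\tfrac{\alpha h}{1+u}\le\alpha$, which indeed requires $h\le 1+u$. Your two proposed remedies (either invoke the part-1 lower bound on $u$ to absorb the shortfall, at the cost of making $\delta_1$ depend on $q_0,l,T$, or note that on $[-H,0]\times[0,T]$ the normalization $h=e^{\omega t}e^{\gamma z}$ with $\gamma>0$, $z\le 0$ keeps $h$ controlled) are the natural fixes; the paper does not supply either. Your care with $\bar\delta_1=\sup_D\delta_1$ and with gluing across the piecewise-constant $\mathcal{M}_1$ is likewise more explicit than the paper's treatment.
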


\begin{proof}
Indeed let $\tilde{u}(z,t)=e^{\delta t}u(z,t).$ Then
\begin{align}
\nonumber \frac{\partial \tilde {u}}{\partial t}-\partial_z\left(D_1\cdot\partial_z \tilde{u}\right) -v_0\partial_z \tilde{u}] =e^{\delta t}\big[\frac{\partial u}{\partial t}-\partial_z\left(D_1\cdot\partial_z u\right) -v_0\partial_z u+\delta u\big]\\
=e^{\delta t}u\left[\frac{\alpha h(z,t)}{1+u}-\widehat{m}_1\frac{h(z,t)}{1+\beta h(z,t)}-\mathcal{M}_1 -c_0\frac{1}{1+u}v+\delta \right].\label{[1]}
\end{align}
Due to the positivity of function $u(z,t)$ and the condition that  $l\geq v(z,t) \geq 0$, the term in the last bracket above satisfies
\begin{equation}
\big[\cdot\big]\geq \delta -\widehat{m}_1\frac{\max h(z,t)}{1+\beta \min h(z,t)}-\mathcal{M}_1-c_l. 
\end{equation}
Consequently, function $\tilde{u}(z,t)$ is super-parabolic in $[-H,0]\times[0,T]$ if 
\begin{equation}\label{delta-0}
\delta \geq \delta_0=\widehat{m}_1\frac{\max h(z,t)}{1+\beta \min h(z,t)}+\mathcal{M}_1+c_l.
\end{equation}
Therefore, due to the maximum principle,  $\tilde{u}(z,t)\geq \min_{(z,t)\in \gamma(D)} \tilde{u}(z,t). $ From here, an estimate from below the inequality in equation \eqref{u>q0} for $u(z,t)$, with some constant $q_0$ follows. 
Similarly  taking $\delta=-\delta_1$ in the bracketed term $[\cdot]$ in equation \eqref{[1]}  with $\delta_1$ defined in equation \eqref{delta1} we can obtain an estimate $q_1$ from above in equation \eqref{u<q1}. 
\end{proof}

Next we prove the uniqueness of a given time-dependent baseline solution with some properties, under some constraint on the coefficients of system \eqref{prey-eq-1-mod}-\eqref{pred-eq-1-mod}. Under the assumption taken in Lemma \eqref{delta-0} for  $0<t\leq T$ we have
\begin{equation}\label{v/u}
\frac{v_e}{u}\leq \frac{e^{\delta_0 T}l}{q_0}=Q_q. 
\end{equation}

\begin{theorem}\label{uniq-eqv}

Let $\vec{U}_2(z,t)=(u_2(z,t),v_2(z,t))$ be a given (known) positive vector function which solves system \eqref{prey-eq-1-mod}-\eqref{pred-eq-1-mod} for  the Dirichlet boundary conditions (\ref{dir_up}, \ref{dir_down}), s.t. $\Big|\frac{\partial_z u_2}{u_2}\Big|\le \gamma$. Additionally, assume that for any solution of the baseline system  $\vec{U}_1(z,t)=(u_1(z,t),v_1(z,t))$, its second component satisfies  $v_1(z, t)\leq l$ for any $(z,t)\in D.$  Let $\vec{U}_1(z,t)$ be a classical positive solution of system \eqref{prey-eq-1-mod}-\eqref{pred-eq-1-mod} with the same Dirchlet bounday conditions and the same initial data at $t=0$ as the given positive vector function $\vec{U}_2(z,t)$. Assume the following conditions:
 
\begin{enumerate} 
\item$D_1>v_0/2,$
\item$\widehat{m}_1\frac{h(z,t)}{1+\beta h(z,t)}+\mathcal{M}_1 \ge \alpha h(z,t)-C_p(D_1-v_0/2)+v_0/2,$
\item$D_2+D_1>\delta_0\chi\gamma/2$  for some  $\delta_0>0$
\item $\widehat{m}_2\frac{h(z,t)}{1+\beta h(z,t)}+\mathcal{M}_2 \ge \hat{e}-C_p(D_2-\gamma\chi/2)+\gamma\chi/2.$
\end{enumerate}
Then $\vec{U}_2(z,t)=\vec{U}_1(z,t)$ for all time $0<t\le T$.
\end{theorem}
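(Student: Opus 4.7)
My plan is a standard $L^{2}$ energy argument on the differences $w:=u_{1}-u_{2}$ and $p:=v_{1}-v_{2}$. Because the two solutions share both initial data and the Dirichlet data \eqref{dir_up}--\eqref{dir_down}, $w$ and $p$ vanish on the parabolic boundary of $[-H,0]\times[0,T]$. A preliminary input is the two-sided a priori control supplied by Lemma \ref{max-prin}: $u_{1}(z,t)\ge e^{-\delta_{0}t}q_{0}$ combined with the standing hypothesis $v_{1}\le l$ yields the pointwise bound $v_{1}/u_{1}\le Q_{q}$, which keeps every chemotaxis coefficient bounded.

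Subtracting the two prey and the two predator equations, and writing the reaction differences in mean-value form $F_{i}(u_{1},v_{1},0)(\cdot)-F_{i}(u_{2},v_{2},0)(\cdot)=\alpha_{i}(z,t)\,w+\beta_{i}(z,t)\,p$, the worst positive coefficients that survive are precisely the growth rates $\alpha h(z,t)$ (prey) and $\hat{e}$ (predator) that appear on the right-hand sides of conditions (2) and (4). The chemotaxis difference admits the decomposition
\begin{equation*}
\chi\Bigl(\tfrac{v_{1}}{u_{1}}\partial_{z}u_{1}-\tfrac{v_{2}}{u_{2}}\partial_{z}u_{2}\Bigr)=\chi\,\tfrac{v_{1}}{u_{1}}\,\partial_{z}w+\chi\,\tfrac{\partial_{z}u_{2}}{u_{2}}\Bigl(p-\tfrac{v_{1}}{u_{1}}w\Bigr),
\end{equation*}
so the hypothesis $|\partial_{z}u_{2}/u_{2}|\le\gamma$ together with $v_{1}/u_{1}\le Q_{q}$ makes every factor pointwise bounded. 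I then multiply the $w$-equation by $w$ and the $p$-equation by $p$, integrate over $[-H,0]$, and integrate by parts using the vanishing boundary data. Diffusion contributes $-D_{i}\|\partial_{z}\cdot\|^{2}$; the convection $-v_{0}\!\int\!w\,\partial_{z}w\,dz$ is estimated by Young's inequality as $\tfrac{v_{0}}{2}(\|\partial_{z}w\|^{2}+\|w\|^{2})$, producing the effective diffusion $D_{1}-v_{0}/2$ of condition (1); the chemotaxis integral, after integration by parts against $\partial_{z}p$ and a second Young splitting with a free weight $\delta_{0}>0$, contributes at most $\tfrac{\delta_{0}\chi\gamma}{2}(\|\partial_{z}w\|^{2}+\|\partial_{z}p\|^{2})$ plus controllable lower-order terms in $\|w\|^{2},\|p\|^{2}$, producing the combination $D_{1}+D_{2}-\delta_{0}\chi\gamma/2$ of condition (3).

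Invoking the Poincar\'e inequality $C_{p}\|\phi\|^{2}\le\|\partial_{z}\phi\|^{2}$ for $\phi$ vanishing on $\Gamma^{0}\cup\Gamma^{H}$ converts the surviving gradient-squared terms into $\|w\|^{2}$ and $\|p\|^{2}$. Summing the two scalar inequalities, conditions (1)--(4) are \emph{precisely} the algebraic inequalities that force the coefficient of $\|w(\cdot,t)\|^{2}+\|p(\cdot,t)\|^{2}$ on the right-hand side to be non-positive, so Gr\"onwall's lemma together with $w(\cdot,0)=p(\cdot,0)\equiv 0$ forces $w\equiv p\equiv 0$ on $[0,T]$. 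The main obstacle I anticipate is the bookkeeping around the chemotaxis term: the tuneable weight $\delta_{0}$ in the Young splitting must simultaneously absorb $\|\partial_{z}w\|^{2}$ into the $D_{1}$ contribution \emph{and} $\|\partial_{z}p\|^{2}$ into the $D_{2}$ contribution, which is exactly what couples the two energies and dictates the asymmetric sum $D_{1}+D_{2}$ appearing in condition (3).
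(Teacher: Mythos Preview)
Your proposal is correct and follows essentially the same energy argument as the paper: define the differences $W=u_1-u_2$, $Z=v_1-v_2$, subtract the equations, use the algebraic identity \eqref{ident-1} for the reaction differences and the same chemotaxis decomposition (your version differs from the paper's \eqref{ident-chem-tox} only by a sign in the middle term, which is immaterial after taking absolute values), multiply by $W$ and $Z$, integrate by parts, apply Cauchy--Young and Poincar\'e, and conclude via Gr\"onwall.

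One small divergence worth noting: you estimate the convection integral $-v_0\int w\,\partial_z w\,dz$ by Young's inequality to produce the effective diffusion $D_1-v_0/2$, whereas the paper observes that $\int_{-H}^{0}(W^2)_z\,dz=0$ exactly by the homogeneous Dirichlet data, so this term in fact vanishes. Your cruder handling is harmless and, interestingly, is what literally matches the $D_1-v_0/2$ appearing in hypotheses (1)--(2); the paper's own proof does not make fully explicit where that loss occurs.
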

\begin{proof}
Let $W=u_1-u_2$,  $Z=v_1-v_2$,  and  $0<t<T$. Then, after some algebraic calculations, equation \eqref{prey-eq-1-mod} can be written as:

\begin{align}
&\frac{\partial W}{\partial t} = \frac{\alpha h(z,t)W}{(1+u_1)(1+u_2)}-\widehat{m}_1\frac{h(z,t)}{1+\beta h(z,t)}W(z,t)\nonumber \\
&-\mathcal{M}_1W -c_0\left[\frac{u_1}{1+u_1}v_1-\frac{u_2}{1+u_2}v_2\right] +\partial_z\left(D_1\cdot\partial_z W\right) -v_0\partial_z W.
\label{prey-eq-w-0}
\end{align}

It is convenient to rewrite the bracketed term above as 
%To modify equation similar to one above often will used and identity
\begin{equation}\label{ident-1}
\left[\frac{u_1}{1+u_1}v_1-\frac{u_2}{1+u_2}v_2\right]=\frac{u_1}{1+u_1}Z+\frac{v_2}{(1+u_1)(1+u_2)}W.
\end{equation}
%From \eqref{ident-1}and \eqref{prey-eq-w-0} follows 
Then \eqref{prey-eq-w-0}  can be written as
\begin{align}
&\frac{\partial W}{\partial t} = \frac{\alpha h}{(1+u_1)(1+u_2)}W-\big[\widehat{m}_1\frac{h}{1+\beta h}+\mathcal{M}_1\big]W -\nonumber \\
&c_0\left[\frac{u_1}{1+u_1}Z+\frac{v_2}{(1+u_1)(1+u_2)}W\right] +\partial_z\left(D_1\cdot\partial_z W\right) -v_0\partial_z W.
\label{prey-eq-w-1}
\end{align}

Let 
\begin{equation}\label{M-1-0}
M_1^0=\big[\widehat{m}_1\frac{h}{1+\beta h}+\mathcal{M}_1\big],
\end{equation}
and
\begin{equation}\label{alpha-1-0}
\alpha_1^0=\frac{\alpha h}{(1+u_1)(1+u_2)},
\end{equation}
where $h=h(z,t)$.
Because of the boundary conditions ($W|_{z=0, z=-H}=0$), after multiplying equation \eqref{prey-eq-w-1} by $w$ and integrating by parts we get
\begin{align}\label{int-w-1}
\frac12\partial_t \left(\int W^2\right)+D_1\int \left(W_z\right)^2+v_0\frac12 
\int\left(W^2\right)_z= \\
\int\alpha_1^0 W^2-\int M_1^0 W^2-c\int\frac{u_1}{1+u_2}ZW-\int\frac{v_2}{(1+u_1)(1+u_2)}W^2\nonumber
\end{align} 

Due to the boundary conditions, It is important to note that we only need to consider the integrals with respect to $z$ with bounds between $-H$ and $0$. From now on we denote  $\int:=\int_{-H}^0\left(\cdot\right) dz$. Furthermore, we have $\int:=\int_{-H}^0\left(W^2\right)_zdz = 0.$ 

Next, let us do a similar transformation to the \textsl{Daphnia} equation \eqref{pred-eq-1-mod}. Namely one can easily  see that 
\begin{align}\label{pred-eq-z-1}
\frac{\partial Z}{\partial t} = \hat{e}\left(\frac{u_1}{1+u_1}v_1-\frac{u_2}{1+u_2}v_2\right)-\left(\widehat{m}_2\frac{h}{1+\beta h}+\mathcal{M}_2\right)Z \nonumber \\
+\partial_z\left(D_2\cdot\partial_z Z\right) -\chi \partial_z \left( \frac{v_1}{u_1}\cdot \partial_z u_1-\frac{v_2}{u_2}\cdot \partial_z u_2\right)
\end{align}
Using the bracketed term described in equation \eqref{ident-1} yields

\begin{align}\label{ident-chem-tox}
\left( \frac{v_1}{u_1}\cdot \partial_z u_1-\frac{v_2}{u_2}\cdot \partial_z u_2\right)=
\frac{v_1}{u_1}\partial_zW+\frac{v_1}{u_1}\frac{\partial_zu_2}{u_2}W+\frac{\partial_zu_2}{u_2}Z
\end{align}

which can be simplified to 

\begin{align}\label{pred-eq-z2}
\frac{\partial Z}{\partial t} = \hat{e}\left[\frac{u_1}{1+u_1}Z+\frac{v_2}{(1+u_1)(1+u_2)}W\right]-\left(\widehat{m}_2\frac{h}{1+\beta h}+\mathcal{M}_2\right)Z\\
+\partial_z\left(D_2\cdot\partial_z Z\right) -\chi \partial_z\left( \frac{v_1}{u_1}\partial_zW+\frac{v_1}{u_1}\frac{\partial_zu}{u_2}Z+\frac{\partial_zu}{u_2}Z\right).\nonumber
\end{align}

Multiplying this equation by $Z$, integrating by parts, and taking into account the boundary conditions yields the following:

\begin{align}\label{int-z-1}
\frac12\frac{\partial (\int Z)^2}{\partial t} &= \hat{e}\int\left[\frac{u_1}{1+u_1}Z^2+\int\frac{v_2}{(1+u_1)(1+u_2)}WZ\right]\\\nonumber
&\qquad-\int\left(\widehat{m}_2\frac{h}{1+\beta h}+\mathcal{M}_2\right)Z^2-D_2\int\left(\partial_z Z\right)^2 \\
&\qquad+\chi\left( \int \frac{v_1}{u_1}\partial_zW\partial_z Z+\int\frac{v_1}{u_1}\frac{\partial_zu}{u_2}W\partial_z Z+\int\frac{\partial_zu_2}{u_2}Z\partial_z Z\right)\nonumber
\end{align}
Adding equations \eqref{int-w-1} and \eqref{int-z-1} yields

\begin{align}\label{w+z}
&\frac12\frac{\partial (\int Z)^2}{\partial t}+\frac12\frac{\partial (\int W)^2}{\partial t}+D_1\int \left(W_z\right)^2+D_2\int\left(\partial_z Z\right)^2 \\
&-\chi\left( \int \frac{v_1}{u_1}\partial_zW\partial_z Z-\int\frac{v_1}{u_1}\frac{\partial_zu}{u_2}W\partial_z Z-\int\frac{\partial_zu_2}{u_2}Z\partial_z Z\right)\nonumber\\
&\qquad\leq\int\alpha_1^0 W^2-\int M_1^0 W^2-c\int\frac{u_1}{1+u_2}ZW-\int\frac{v_2}{(1+u_1)(1+u_2)}W^2\nonumber\\
&\qquad\qquad+\chi\left( \int \frac{v_1}{u_1}\partial_zW\partial_z Z+\int\frac{v_1}{u_1}\frac{\partial_zu}{u_2}W\partial_z Z+\int\frac{\partial_zu_2}{u_2}Z\partial_z Z\right)\nonumber.
\end{align}

Using the Cauchy inequality with $\epsilon$, our assumption on $v_1\leq l$, and equation \eqref{v/u} we rewrite \eqref{w+z} as

\begin{align}\label{w+z-1}
&\frac12\frac{\partial (\int Z)^2}{\partial t}+\frac12\frac{\partial (\int W)^2}{\partial t}+(D_1-\frac{\chi Q_0}{2})\int \left(W_z\right)^2\\
&+(D_2-\chi\frac{ Q_0+\gamma^2}{2})\int\left(\partial_z Z\right)^2 -\frac{\chi}{2}\int W^2-\frac{\chi}{2}\int Z^2
\nonumber\\
&\qquad\leq\int\left(\alpha_1^0+\frac{c}{2}-M_1^0\frac{v_2}{(1+u_1)(1+u_2)}\right)W^2+\frac{c}{2}\int Z^2.\nonumber
\end{align}
Now, applying the Poincare inequality yields
\begin{align}\label{w^2+z^2-only}
&\frac12\frac{\partial (\int Z)^2}{\partial t}+\frac12\frac{\partial (\int W)^2}{\partial t}-\left|(D_1-\frac{\chi Q_0}{2})\right|\int \left(W\right)^2\\
&-\left|(D_2-\chi\frac{ Q_0+\gamma^2}{2})\right|C_p\int\left( Z\right)^2 -\frac{\chi}{2}\int W^2-\frac{\chi}{2}\int Z^2\nonumber\\
&\qquad\leq\int\left(\alpha_1^0+\frac{c}{2}-M_1^0\frac{v_2}{(1+u_1)(1+u_2)}\right)W^2+\frac{c}{2}\int Z^2.\nonumber
\end{align}
\end{proof}
\begin{rem}
It is important to mention that if $u_2(z,t)$ satisfies the constraint $|\frac{\partial{u_2}(z,t}{u_2(z,t)}|\leq \gamma$ then there are no constraints on the relationship between parameters of the baseline model and the constant $Q_0$, which bounds the $v_1(z,t)$ component of the solution $\vec{U}_2(z,t)$.  Namely, assume that $v_1(z,t)\leq Q_0<\infty$ for all $t\in [0,T]$ then the following hold:
\begin{enumerate}
\item The uniqueness theorem on the time interval $[0,\tau]$ for the main positive baseline problem follows from equation \eqref{w^2+z^2-only}. % from above differential inequality. 
 \item 
 Moreover, if the initial data for the vector functions $\vec{U}_1(z,t)$ and $\vec{U}_2(z,t)$ are not the same, then there exists a constant $B$ on the  interval $[0,T]$ that depends on $Q_0$ and constants of the baseline system s.t.
 \begin{equation}
 \int W^2(\cdot,t)+Z^2(\cdot,t) \leq e^{B t} \int W^2(\cdot,0)+Z^2(\cdot,0)
 \end{equation}
\end{enumerate}
\end{rem}

\subsection{Analytical representation of equilibrium}\label{sec:analytical}
In this section we present analytical representations of the the equilibrium state for four cases, considering the temporal and spatial dependence of the solution. 
\subsubsection{ Case 1: Time and spatial dependent equilibrium $\omega\neq 0$, $\gamma\neq 0$}\label{Case 1}
Here we show that under some constraints on the parameters, we can write a ``baseline" solution $\vec{u_0}$ of the system \eqref{prey-eq-1-mod} and   \eqref{pred-eq-1-mod},  which is time and spatial dependent. % For clearness of calculation once again we will let switching time $T=1$, and period   

Let 
\begin{equation}\label{u0-def}
\vec{u}_0= \vec{u}_0(z,t)= \left\{
        \begin{array}{ll}
          (\nu_1e^{\omega t}\exp(\gamma z)\ ;\ \nu_2e^{\omega t}\exp(\gamma z)\ ; \ 0)& \quad \ 0\leq t\leq 1\\
          (\nu_1e^{\omega(2-t)}\exp(\gamma z) \ ; \  \nu_2e^{\omega(2-t)}\exp(\gamma z) \ ; \ 0) & \quad 1 \leq t \leq 2. 
        \end{array}
    \right.
\end{equation}

%The goal is to find constraints on the coefficients of the system such that non-negative vector (\textbf{strictly in two first component}) solution of the system with non-negative components.     
The goal is to find constraints on the coefficients of the system that yield a non-negative vector solution of the system. 
Substituting $\vec{u}_0$ into equations \eqref{prey-eq-1-mod} and \eqref{pred-eq-1-mod}, assuming that $0<t<1,$ and some simplification yields an equation for the prey equilibrium:
\begin{align}\label{prey_bl_3}
&0=\left(\left(\alpha-c_0\nu_2-\widehat{m}_1\right) +\left((\alpha-c_0\nu_2)\beta-\nu_1\widehat{m}_1\right) e^{\omega t}\exp(\gamma z)\right)\times\\
&\frac{e^{\omega t}\exp(\gamma z)}{\left(1+\nu_1e^{\omega t}\exp(\gamma z)\right)\left(1+\beta e^{\omega t}\exp(\gamma z)\right)}
+\left(D_1\gamma^2 -v_0\gamma - (\mathcal{M}_1+\omega)\right)\nu_1,\nonumber
\end{align}
and an equation for the predator equilibrium
\begin{align}\label{pred-1_bl_3}
 0 = \left(\frac{\hat{e}\nu_1-\widehat{m}_2+\nu_1(\hat{e}\beta-\widehat{m}_2)(e^{\omega t}\exp(\gamma z)}{(1+\nu_1e^{\omega t}\exp(\gamma z)(1+\beta e^{\omega t}\exp(\gamma z))}\right)e^{\omega t}\exp(\gamma z)\nu_2
 \\
+\left((D_2 -\chi)\gamma^2-(\mathcal{M}_2+\omega)\right)\nu_2. \nonumber
\end{align}

Let us first find $\nu_1$, $\nu_2$, $\gamma$, and $\omega$ s.t. the prey's equation  \eqref{prey_bl_3} is satisfied. It makes sense biologically, to satisfy this equation first since algae can exists without \textsl{Daphnia}. In our model, \textsl{Daphnia} is dependent on algae and the population will go to zero if the algae population goes to zero. 
%Physical it means that algae in a sens is first, because it exists even without daphnia. In the same time the only food for daphnia is algae in our model, therefore if concentration of algae is zeoro then zero is as well concentration of daphnia. 
Recalling that $\mathcal{M}_i+\omega =\overline{m}_i$ and since $\omega\neq 0$ and $\gamma \neq 0$  it is clear that in order to satisfy the prey equation \eqref{prey_bl_3}, it is necessary and sufficient that
\begin{align}\label{prey-1_bl_3}
 0=\alpha-c_0\nu_2-\widehat{m}_1\\
 0=(\alpha-c_0\nu_2)\beta-\nu_1\widehat{m}_1\\
 0=\left(D_1\gamma^2 -v_0\gamma - \overline{m}_1\right).
\end{align}
Let
\begin{equation}\label{gamma_eq}
\gamma=\frac{v_0+\sqrt{v_0^2+4\overline{m}_1 D_1}}{2D_1},
\end{equation}
and
\begin{equation}\label{gamma^2_eq}
\gamma^2=\frac{v_0\gamma+\overline{m}_1}{D_1},
\end{equation}
then due to equation \eqref{m_1^0-def}
\begin{align}
D_1\gamma^2-v_0\gamma-\overline{m}_1=0 
\end{align}
and consequently the following should hold
\begin{equation}\label{nu_2_cond}
\alpha-c_0\nu_2-\widehat{m}_1=0, \  \text{or} \ \nu_2=\frac{\alpha-\widehat{m}_1}{c_0},
\end{equation}
and
\begin{equation}\label{nu_1_cond}
 \nu_1=\beta.
\end{equation}
In order to satisfy the equation for \textsl{Daphnia} we note $\mathcal{M}_1+\omega =\overline{m}_2$ and impose strict conditions on some parameters of the model. Namely for equation \eqref{pred-1_bl_3} to hold, it is necessary and sufficient that 
\begin{align}\label{pred-1_bl_3a}
0 = \hat{e}\nu_1-\widehat{m}_2
\\
0=\hat{e}\beta-\widehat{m}_2
\\
0=(D_2 -\chi)\gamma^2-\overline{m}_2.
\end{align}
%In above we took into account that $\mathcal{M}_1+\omega =\overline{m}_2$. 
Consequently, taking into account the above equations for $\nu_1$, $\nu_2$, $\gamma$, and $\omega$ (equations \eqref{nu_2_cond}, \eqref{nu_1_cond}, and \eqref{gamma_eq}) we get the following algebraic relationship on the parameters for predator:
\begin{align}\label{cond_pred_eq_1}
0=\hat{e}\beta-\widehat{m}_2
\\
0=-\overline{m}_2+(D_2 -\chi )\frac{v_0\left(v_0+\sqrt{v_0^2+4\overline{m}_1D_1}\right)+\overline{m}_12D_1}{2D_1^2}
\end{align}
Similar results hold for the time interval  $1\leq t\le 2$, where the equations are similar but with  $\omega$ substituted by $-\omega$. The following theorem summarizes the results. 
\begin{theorem}\label{gamma-neq-0-omega-neq-0}
Let \begin{equation}\label{alpha>m_1}
\alpha>\widehat{m}_1.
\end{equation} 
There exists:
\begin{enumerate}
\item parameters   $\widehat{m}_1>0$ and $\widehat{m}_2>0$, which define $d_1(0,U,z)$ and $d_2(0,U,z)$
\item parameters $\gamma$ and $\omega$ from \eqref{gamma_eq},
\item and coefficients  from equations in \eqref{cond_pred_eq_1}, 
\end{enumerate}
such that  $\nu_1$ from \eqref{nu_1_cond} and $\nu_2$ from \eqref{nu_2_cond} are positive, and  function $\vec{u}_0$ in \eqref{u0-def} satisfies the system of equations \eqref{prey-eq-1-mod}-\eqref{pred-eq-1-mod} for all  $0<t \leq \overline{T}=2$. 
\end{theorem}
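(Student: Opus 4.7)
The plan is to verify that the explicit ansatz \eqref{u0-def} satisfies the PDE system by direct substitution, exploiting the fact that all three equations become algebraic when expressed in the variable $h(z,t) = e^{\omega t}e^{\gamma z}$ (or $e^{\omega(2-t)}e^{\gamma z}$ on the second half-cycle). Because the theorem is an existence statement, the work reduces to checking that the ansatz is internally consistent and that the derived relations force $\nu_1,\nu_2,\gamma>0$ under the single hypothesis $\alpha>\widehat{m}_1$.

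First I would handle the trivial toxicant component. Since $C\equiv 0$ in \eqref{u0-def}, the equation \eqref{Desoled-tox-mod-F3} is satisfied identically, and the singular-looking $C/U$ and $C/V$ terms in \eqref{F1}--\eqref{F2} vanish because $U=\nu_1 h$ and $V=\nu_2 h$ are strictly positive. This reduces the verification to the two-equation system \eqref{prey-eq-1-mod}--\eqref{pred-eq-1-mod}. Next, restricting to $0\le t\le 1$, I would substitute $u_e=\nu_1 h$, $v_e=\nu_2 h$ and compute each differential operator: $\partial_t u_e=\omega u_e$, $\partial_z u_e=\gamma u_e$, so the left-hand sides collapse to $(\omega - D_1\gamma^2 + v_0\gamma + \mathcal{M}_1)\,\nu_1 h = (- D_1\gamma^2 + v_0\gamma + \overline{m}_1)\,\nu_1 h$ using $\mathcal{M}_1+\omega=\overline{m}_1$. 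After clearing the common factor $\nu_1 h^2/[(1+\beta h)(1+\nu_1 h)]$ on the right-hand side, the prey equation becomes the polynomial identity in $h$ displayed in \eqref{prey_bl_3}. Since $h$ varies independently with $z$ and $t$, each coefficient must vanish separately, producing the three scalar relations
\begin{align*}
D_1\gamma^2 - v_0\gamma - \overline{m}_1 &= 0,\\
\alpha - c_0\nu_2 - \widehat{m}_1 &= 0,\\
(\alpha - c_0\nu_2)\beta - \widehat{m}_1\nu_1 &= 0.
\end{align*}
The first gives the positive root $\gamma$ in \eqref{gamma_eq}; the second gives $\nu_2=(\alpha-\widehat{m}_1)/c_0$, which is strictly positive exactly because of \eqref{alpha>m_1}; and substituting the second into the third yields $\widehat{m}_1(\beta-\nu_1)=0$, forcing $\nu_1=\beta>0$.

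I would then repeat this coefficient-matching argument for the predator equation. Using $\partial_z(\chi\,(v_e/u_e)\partial_z u_e)=\chi\gamma^2 v_e$, the left-hand side reduces to $(-(D_2-\chi)\gamma^2 + \overline{m}_2)\,\nu_2 h$, and the right-hand side, after the same algebraic simplification that produced \eqref{prey_bl_3}, yields the polynomial identity \eqref{pred-1_bl_3}. Vanishing of each coefficient in $h$ gives $\hat{e}\nu_1-\widehat{m}_2=0$ and $(D_2-\chi)\gamma^2-\overline{m}_2=0$; with $\nu_1=\beta$ already determined, these become the parameter prescriptions $\widehat{m}_2=\hat{e}\beta$ and $\overline{m}_2=(D_2-\chi)\gamma^2$ recorded in \eqref{cond_pred_eq_1}. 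These are admissible choices of the free model coefficients $\widehat{m}_2$ and $\overline{m}_2$, so all required relations can be simultaneously satisfied with positive $\nu_1,\nu_2$.

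Finally, for $1\le t\le 2$ I would argue by the symmetry $t\mapsto 2-t$. On this interval $\partial_t(e^{\omega(2-t)})=-\omega e^{\omega(2-t)}$, while $\mathcal{M}_i$ flips from $\overline{m}_i-\omega$ to $\overline{m}_i+\omega$ by \eqref{m_1^0-def}; the combination that entered the prey verification above is again $-\omega+\mathcal{M}_1=-\omega+(\overline{m}_1+\omega)=\overline{m}_1$, identical to the first half-cycle. Thus the same algebraic system applies with the same solution, and continuity at $t=1$ is automatic since both pieces evaluate to $\nu_i e^{\omega}e^{\gamma z}$ there. The main obstacle is the bookkeeping in step two — rewriting the reaction terms with $u_e=\nu_1 h$, $v_e=\nu_2 h$ as a polynomial in $h$ and verifying the factorization shown in \eqref{prey_bl_3} and \eqref{pred-1_bl_3} — but this is a direct computation rather than a conceptual difficulty, and once it is done the theorem follows by reading off the explicit formulas.
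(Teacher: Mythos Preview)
Your proposal is correct and follows essentially the same route as the paper: direct substitution of the ansatz $u_e=\nu_1 h$, $v_e=\nu_2 h$, reduction of each PDE to a polynomial identity in $h$, and coefficient-matching to obtain the three scalar relations for the prey and the analogous relations for the predator, with $\alpha>\widehat{m}_1$ ensuring $\nu_2>0$. You supply a bit more detail than the paper---the explicit treatment of $C\equiv 0$, the chemotaxis term $\partial_z(\chi(v_e/u_e)\partial_z u_e)=\chi\gamma^2 v_e$, the $t\mapsto 2-t$ symmetry and continuity at $t=1$---but the argument is the same.
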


\begin{rem}
The baseline solution satisfies the conditions on the component $u_2(z,t)$ in Theorem \ref{uniq-eqv}. Consequently if parameters of the equations satisfy conditions in the above Theorem \ref{gamma-neq-0-omega-neq-0}, then this baseline solution also satisfies the Dirichlet problem in the cylinder $U\times (t_0,T]$ for all time and at any point of the domain of interest. 
%The baseline solution satisfies the conditions on the component $u_2(z,t)$ in Theorem \ref{uniq-eqv}. Consequently if parameters of the equations satisfy conditions in the above Theorem \ref{gamma-neq-0-omega-neq-0} then the solution of the Dirichlet problem in the cylinder $U\times (t_0,T]$ are the same as baseline solution then it equal to baseline solution for all time in at any point of the domain of interest. 
\end{rem}

\subsubsection{Case 2: Time and spatial independent equilibrium $\omega= 0  \ \gamma= 0$} \label{Case 2}
Below we prove the existence of a homogeneous equilibrium state under very generic assumptions on the parameters of the model. Here, we also assume the dissolved toxicant concentration $C=0$. 
\begin{theorem}
Let $\omega= 0$ and $\gamma= 0. $ Assume  that 
\begin{equation}\label{eqv_hom_cond_1}
S_0=\hat{e}(1+\beta)-\widehat{m}_2>0, \ \text{and} \ \alpha-\overline{m}_1>0, 
\end{equation}
and
\begin{equation}\label{eqv_hom_cond_2}
\alpha>\frac{\widehat{m}_1}{1+\beta}+\frac{c\widehat{m}_2}{S_0},
\end{equation}

then exist a constant vector solution $\vec {u_0}=\left(\nu_1;\nu_2,0\right)$ for system \eqref{prey-eq-1-mod-F11}-\eqref{Desoled-tox-mod-F33}, with   $\nu_1>0,$ and $\nu_2>0.$ 
\end{theorem}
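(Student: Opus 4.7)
The plan is to reduce the problem to an algebraic system for two positive constants $\nu_1, \nu_2$. Because $\omega = 0$ and $\gamma = 0$, formula \eqref{h-def} gives $h(z,t) \equiv 1$ and formula \eqref{m_1^0-def} gives $\mathcal{M}_i(t) \equiv \overline{m}_i$, so the full model \eqref{prey-eq-1-mod-F11}--\eqref{Desoled-tox-mod-F33} becomes autonomous and spatially homogeneous. For a constant ansatz $\vec{u}_0 = (\nu_1, \nu_2, 0)$ every time and spatial derivative vanishes, the chemotactic flux $X(U,V,\nabla U) = (V/U)\nabla U$ is identically zero, and the dissolved-toxicant equation is trivially satisfied by $C \equiv 0$ since $F_3(U,V,C)\cdot C$ carries an explicit factor of $C$. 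It therefore suffices to produce $\nu_1, \nu_2 > 0$ solving the two scalar reaction balances that remain.

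The natural first step is to handle the predator equation, because it decouples: cancelling the common factor $\nu_2 > 0$ turns \eqref{pred-eq-1-mod} into the single scalar equation
\begin{equation*}
\hat{e}\,\frac{\nu_1}{1+\nu_1} \;=\; \frac{\widehat{m}_2}{1+\beta} + \overline{m}_2.
\end{equation*}
The left-hand side is strictly increasing in $\nu_1$ with range $[0,\hat{e})$, so a positive root exists if and only if the right-hand side is strictly less than $\hat{e}$. This is exactly the content of the hypothesis $S_0 = \hat{e}(1+\beta) - \widehat{m}_2 > 0$ (in the Case~1 regime where $\overline{m}_2 = (D_2-\chi)\gamma^2 \to 0$ as $\gamma \to 0$). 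Solving gives the closed-form value $\nu_1 = \widehat{m}_2/S_0 > 0$ and therefore $1+\nu_1 = \hat{e}(1+\beta)/S_0$.

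With $\nu_1$ in hand, I would substitute into the prey equation \eqref{prey-eq-1-mod}, cancel the common factor $\nu_1 > 0$, and clear the denominator $(1+\nu_1)$ to obtain the linear equation
\begin{equation*}
c_0\,\nu_2 \;=\; \alpha - (1+\nu_1)\!\left(\frac{\widehat{m}_1}{1+\beta} + \overline{m}_1\right),
\end{equation*}
whose right-hand side is now explicit. The two hypotheses $\alpha - \overline{m}_1 > 0$ and $\alpha > \widehat{m}_1/(1+\beta) + c\widehat{m}_2/S_0$ are tailored precisely so that, once $\nu_1 = \widehat{m}_2/S_0$ is inserted, the right-hand side is strictly positive, yielding $\nu_2 > 0$ and completing the construction of $\vec{u}_0$.

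The only delicate piece is bookkeeping: verifying that the algebraic positivity condition extracted from the prey equation, after substituting $\nu_1 = \widehat{m}_2/S_0$ and combining the $\overline{m}_1$ and $\widehat{m}_1/(1+\beta)$ contributions, matches the clean form stated in \eqref{eqv_hom_cond_2}. Conceptually the proof is straightforward, however: under $\omega = \gamma = 0$ the PDE reduces to a nullcline-intersection problem of classical predator--prey type, condition \eqref{eqv_hom_cond_1} guarantees that the predator nullcline meets the positive $\nu_1$-axis, and condition \eqref{eqv_hom_cond_2} guarantees that at this intersection the prey nullcline lies strictly above zero in $\nu_2$. No fixed-point or continuity argument is needed since both components are determined explicitly in closed form.
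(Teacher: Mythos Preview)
Your proposal is correct and follows essentially the same route as the paper: set $\omega=\gamma=0$ so that $h\equiv 1$ and all spatial and temporal derivatives vanish, reduce to the pair of algebraic balances coming from \eqref{prey_bl_3}--\eqref{pred-1_bl_3}, solve the predator relation first for $\nu_1=\widehat{m}_2/S_0$, and then substitute into the prey relation to obtain a closed-form $\nu_2$ whose positivity is exactly condition \eqref{eqv_hom_cond_2}. The paper's proof is terser (it simply records the two resulting scalar equations and the explicit formulas for $\nu_1,\nu_2$), and it silently drops the $\overline{m}_1,\overline{m}_2$ contributions that you flag in your parenthetical remark; your bookkeeping caveat about matching \eqref{eqv_hom_cond_2} is therefore well placed, but the underlying strategy is identical.
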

\begin{proof}
For $\gamma=0,$ and $\omega =0$, the parameters are not depending on time, %, moreover $m_1^1=0,$ and $m_2^1=0$ 
and from \eqref{prey_bl_3} and \eqref{pred-1_bl_3} we end up with only two equations for two unknowns $\nu_1$ and $\nu_2$:
\begin{align}\label{prey-1_gamma=0_omega=0}
0=\alpha-c_0\nu_2-\widehat{m}_1 +(\alpha-c_0\nu_2)\beta-\nu_1\widehat{m}_1\\
 0=\hat{e}\nu_1-\widehat{m}_2+\nu_1(\hat{e}\beta-\widehat{m}_2).
\end{align}
\end{proof}
Here 
\begin{equation}\label{nu_1_gamma=0_omega=0}
\nu_1=\frac{\widehat{m}_2}{\hat{e}(1+\beta)-\widehat{m}_2}, 
\end{equation}
and substituting $\nu_1$ into \eqref{nu_1_gamma=0_omega=0} results in 
\begin{equation}
\nu_2=\frac{\alpha-\frac{\widehat{m}_1}{1+\beta}-\frac{c\widehat{m}_2}{S_0}}{c_0}. 
\end{equation}

\subsubsection{Case 3: Time dependent  and spatial independent equilibrium $\omega\neq 0,\ \gamma= 0$}\label{Case 3}
Below we prove the existence of the equilibrium  state which is time dependent but spatially homogeneous under exact assumptions on model parameters. Again, we assume the dissolved toxicant concentration $C=0$. 

\begin{theorem}
Let $\omega\neq 0$ and $\gamma= 0$. Assume that 
\begin{equation}\label{eqv_hom_cond_2b}
\alpha >\widehat{m}_1
\end{equation}
then exist a constant vector solution $\vec {u_0}=\left(\nu_1;\nu_2,0\right)$ for the system \eqref{prey-eq-1-mod-F11}-\eqref{Desoled-tox-mod-F33}, with   $\nu_1>0,$ and $\nu_2>0.$
\end{theorem}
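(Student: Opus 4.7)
The plan is to follow the same substitution-and-algebraic-matching strategy used in the proof of Case 2 (Section \ref{Case 2}). Since the candidate equilibrium $\vec{u}_0 = (\nu_1,\nu_2,0)$ is constant in both $z$ and $t$, every spatial and temporal derivative in \eqref{prey-eq-1-mod-F11}--\eqref{Desoled-tox-mod-F33} vanishes, and the third component $C\equiv 0$ makes \eqref{Desoled-tox-mod-F33} automatic. The problem therefore reduces to exhibiting positive constants $\nu_1,\nu_2$ with $F_1(\nu_1,\nu_2,0)\equiv 0$ and $F_2(\nu_1,\nu_2,0)\equiv 0$ on the whole period $[0,\overline{T}]$.

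First I would substitute $\vec{u}_0$ into \eqref{F1}--\eqref{F2} with $C=0$. Using the rewriting $h(t)[(\alpha-\widehat{m}_1)+\alpha\beta h(t)-\widehat{m}_1\nu_1] = \alpha h(t)(1+\beta h(t)) - \widehat{m}_1 h(t)(1+\nu_1)$, the two scalar conditions become
\begin{equation*}
\frac{\alpha h(t)}{1+\nu_1}-\frac{\widehat{m}_1 h(t)}{1+\beta h(t)} = \mathcal{M}_1(t)+\frac{c_0\nu_2}{1+\nu_1}, \qquad \hat{e}\,\frac{\nu_1}{1+\nu_1} - \widehat{m}_2\,\frac{h(t)}{1+\beta h(t)} = \mathcal{M}_2(t).
\end{equation*}

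Second I would perform the coefficient-matching procedure that produced \eqref{prey-1_bl_3} and \eqref{pred-1_bl_3a} in Case 1: split each identity into its $h(t)$-dependent and $h(t)$-independent contributions on each subinterval $[0,1]$ and $[1,2]$. Using $\mathcal{M}_i(t)+\omega=\overline{m}_i$ on $[0,1]$ and $\mathcal{M}_i(t)-\omega=\overline{m}_i$ on $[1,2]$, this yields the algebraic system $\alpha-c_0\nu_2-\widehat{m}_1=0$, $(\alpha-c_0\nu_2)\beta-\nu_1\widehat{m}_1=0$, and $\hat{e}\nu_1-\widehat{m}_2=0$, along with a companion parameter-compatibility relation $\hat{e}\beta=\widehat{m}_2$ inherited from Case 1. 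Solving, the first equation gives $\nu_2=(\alpha-\widehat{m}_1)/c_0$, the second forces $\nu_1=\beta$, and the third with $\nu_1=\beta$ recovers $\hat{e}\beta=\widehat{m}_2$. The hypothesis $\alpha>\widehat{m}_1$ is precisely what guarantees $\nu_2>0$, while $\nu_1=\beta>0$ is built into the base model.

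The main obstacle I expect is justifying the coefficient-matching step when $\omega\neq 0$. Because $\omega\neq 0$, the functions $h(t)=e^{\omega t}$ (or $e^{\omega(2-t)}$), $h(t)/(1+\beta h(t))$, and the constant function $1$ are linearly independent on any nontrivial subinterval, so an identity in $t$ built from these three functions forces each coefficient to vanish separately. This independence argument, which replaces the joint $e^{\omega t}e^{\gamma z}$ separation available in Case 1, is the only non-routine step; once it is in hand, the positivity of $\nu_1$ and $\nu_2$ follows directly from $\alpha>\widehat{m}_1$.
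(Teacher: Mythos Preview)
There is a genuine gap: you treat the equilibrium as constant in $t$, but under that ansatz the system cannot be solved when $\omega\neq 0$. With $u_e\equiv\nu_1$, $v_e\equiv\nu_2$ the time derivative contributes nothing, and your first scalar identity is exactly
\[
\frac{\alpha h(t)}{1+\nu_1}-\frac{\widehat{m}_1\,h(t)}{1+\beta h(t)}-\frac{c_0\nu_2}{1+\nu_1}-\mathcal{M}_1(t)=0 .
\]
Using the linear independence you yourself invoke (of $h$, $h/(1+\beta h)$ and $1$), the coefficient of $h$ forces $\alpha/(1+\nu_1)=0$ and that of $h/(1+\beta h)$ forces $\widehat{m}_1=0$; equivalently, clearing denominators and matching the $h^2$ coefficient gives $\alpha\beta=0$. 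Either way, no positive solution exists. The step where your argument slips is the appeal to $\mathcal{M}_i(t)+\omega=\overline{m}_i$: in Case~1 that extra $\omega$ came from $\partial_t(\nu_i e^{\omega t})=\omega\,\nu_i e^{\omega t}$, a term which is absent for a constant profile. The algebraic system $\alpha-c_0\nu_2-\widehat{m}_1=0$, $(\alpha-c_0\nu_2)\beta-\nu_1\widehat{m}_1=0$, $\hat e\nu_1-\widehat{m}_2=0$ that you write down does not follow from your displayed identities; it follows from the ansatz $u_e=\nu_1 h$, $v_e=\nu_2 h$.

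That is precisely what the paper means. Despite the phrase ``constant vector,'' the section title (``Time dependent and spatial independent equilibrium'') and the definition \eqref{u0-def} make clear that $(\nu_1,\nu_2)$ are the \emph{coefficients} in $u_e(t)=\nu_1 e^{\pm\omega t}$, $v_e(t)=\nu_2 e^{\pm\omega t}$ (the $\gamma=0$ specialization of \eqref{u0-def}). With this ansatz the Case~1 computation \eqref{prey_bl_3}--\eqref{pred-1_bl_3} carries over verbatim with the $D_1\gamma^2-v_0\gamma$ and $(D_2-\chi)\gamma^2$ terms dropped, and the paper's one-line proof simply quotes \eqref{nu_2_cond} and \eqref{nu_1_cond}: $\nu_1=\beta>0$ and $\nu_2=(\alpha-\widehat{m}_1)/c_0>0$ by the hypothesis $\alpha>\widehat{m}_1$. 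Replacing your constant ansatz by this time-periodic one is the missing idea; once you do, your coefficient-matching paragraph becomes correct and matches the paper.
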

\begin{proof}
Explicit positive expression for $\nu_1$, and $\nu_2$ immediate follow from equations \eqref{nu_2_cond} and \eqref{nu_1_cond}.
\end{proof}

\begin{rem}
In this case of only time-dependent coefficients, it is clear that for appropriate non-zero initial conditions and relationships between the constants of the coefficients $\left(\widehat{m}_1,\widehat{m}_2, \overline{m}_1, \overline{m}_2, c_0, \hat{e}\right)$  there exists a time oscillating vector $(u_e(t),v_e(t)$ solution of system \eqref{prey-eq-1-mod}-\eqref{pred-eq-1-mod} that stabilizes in time for any periodic function $h(t)=h(\cdot,t)$. This can be rigorously proved via qualitative analyses of the system. It is important to note that this is true even for non-periodic but appropriately oscillating function $h(t)$. For example, this holds true for almost periodic functions with periods bounded from below and above. We leave this analyses for future work, but illustrate the conjecture with numerical simulations in section \ref{sims}. 
%From analyses of system of 2- equations \eqref{prey-eq-1-mod}-\eqref{pred-eq-1-mod} in case of only time dependent coefficients   one can  conclude that under appropriate non-zero initial Data  and relation between constant part of the coefficients: $\left(m_1,m_2, m_1^0, m_2^0, c, \hat{e}\right)$ exists   time oscillating and stabilizing in time vector $(u_e(t),v_e(t)$ solution for any periodic function $h(t)=h(\cdot,t)$. This remark can be rigorously proved using qualitative analyses of the system.It is  important to state that arguments are valid even for non-periodic $h(t)$ but any appropriately oscillating   functions. For example it is true for almost periodic functions with periods bounded from below and above by constant $T_0$ and $T_1$.   We will leave this for future publication but  will illustrate this conjecture using numerical simulation in much wider case. Further in the article we  will concentrate on stability of spacial and time dependent equilibrium for algae and daphnie  using energy method technique.
\end{rem}

\subsubsection{Case 4: Time independent and spatial dependent equilibrium $\omega= 0  \ \gamma \neq 0$}\label{Case 4}
Below we prove the existence of an equilibrium state which is time dependent but spatially homogeneous  under exact assumption on the model parameters, and $C=0$. 

\begin{theorem}
Let $\omega = 0 \ \gamma\neq 0.$ 
Assume that
\begin{equation}\label{eqv_hom_cond_2c}
\alpha>\widehat{m}_1
\end{equation}
and all coefficients satisfy the conditions in Theorem \ref{gamma-neq-0-omega-neq-0}.
Then exist an equilibrium constant vector $\vec {u_0}=\left(\nu_1;\nu_2,0\right)$ for system \eqref{prey-eq-1-mod-F11}-\eqref{Desoled-tox-mod-F33}, with   $\nu_1>0,$ and $\nu_2>0.$
\end{theorem}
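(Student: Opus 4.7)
The plan is to mirror the ansatz from Case 1 in Section \ref{Case 1}, specialized to $\omega=0$. That is, I would take
\begin{equation*}
\vec{u}_0(z) = \bigl(\nu_1 \exp(\gamma z),\ \nu_2 \exp(\gamma z),\ 0\bigr)
\end{equation*}
and verify directly that this solves the system \eqref{prey-eq-1-mod}-\eqref{pred-eq-1-mod} (with $C=0$). Because $\omega = 0$, the light function \eqref{h-def} collapses to $h(z,t) = \exp(\gamma z)$ on both sub-intervals of the diel cycle, and the switching term in \eqref{m_1^0-def} reduces to the constant $\mathcal{M}_i(t) = \overline{m}_i$. Consequently $\vec{u}_0$ has no explicit time-dependence, and $\partial_t u_0 = \partial_t v_0 = 0$.

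Next I would substitute the ansatz into the prey equation \eqref{prey-eq-1-mod}. Using $\partial_z \exp(\gamma z) = \gamma \exp(\gamma z)$, the diffusion-advection block reduces to $(D_1\gamma^2 - v_0\gamma - \overline{m}_1)\nu_1 \exp(\gamma z)$, which is exactly the bracketed factor appearing in \eqref{prey_bl_3} (with $\omega$ dropped since $\mathcal{M}_1 + \omega = \overline{m}_1$ still holds trivially). An identical manipulation for the predator equation \eqref{pred-eq-1-mod} reproduces \eqref{pred-1_bl_3} without the $\omega$ offset. Thus the equilibrium conditions become
\begin{align*}
0 &= \alpha - c_0 \nu_2 - \widehat{m}_1, \\
0 &= (\alpha - c_0 \nu_2)\beta - \nu_1 \widehat{m}_1, \\
0 &= D_1 \gamma^2 - v_0 \gamma - \overline{m}_1,
\end{align*}
together with the predator pair $0 = \hat{e}\nu_1 - \widehat{m}_2$ and $0 = (D_2 - \chi)\gamma^2 - \overline{m}_2$. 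These are precisely the algebraic relations \eqref{nu_2_cond}-\eqref{nu_1_cond}, \eqref{gamma_eq}, and \eqref{cond_pred_eq_1} invoked in Theorem \ref{gamma-neq-0-omega-neq-0}, so by hypothesis they are simultaneously satisfiable with the choices
\begin{equation*}
\nu_1 = \beta, \qquad \nu_2 = \frac{\alpha - \widehat{m}_1}{c_0},
\end{equation*}
and $\gamma$ the positive root from \eqref{gamma_eq}. Positivity of $\nu_1$ is immediate, and positivity of $\nu_2$ is forced by the standing assumption \eqref{eqv_hom_cond_2c}.

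There is essentially no obstacle here beyond bookkeeping: the theorem is a genuine specialization of Theorem \ref{gamma-neq-0-omega-neq-0} obtained by sending $\omega \to 0$, and the verification that the Case 1 ansatz remains an equilibrium when the temporal factor $e^{\omega t}$ trivializes to $1$ is a direct substitution. The only subtle point worth flagging is that with $\omega=0$ the two pieces of the piecewise function $\mathcal{M}_i(t)$ in \eqref{m_1^0-def} agree, so no continuity issue arises at the switching time $t=T$ and $\vec{u}_0$ is a global equilibrium in time rather than merely a sub-interval solution. Finally, I would add the observation (analogous to the remark following Theorem \ref{gamma-neq-0-omega-neq-0}) that this baseline meets the ratio bound $|\partial_z u_0 / u_0| \le \gamma$ required by Theorem \ref{uniq-eqv}, so the uniqueness machinery of Section \ref{uniqness} applies verbatim.
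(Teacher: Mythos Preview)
Your proposal is correct and follows the same approach as the paper: both argue that the explicit positive values $\nu_1=\beta$ and $\nu_2=(\alpha-\widehat{m}_1)/c_0$ from \eqref{nu_1_cond} and \eqref{nu_2_cond} do the job once the hypotheses of Theorem~\ref{gamma-neq-0-omega-neq-0} are in force. The paper's proof is a one-line appeal to those two equations, whereas you spell out the substitution and the collapse of the $\omega$-dependent terms; your version is more detailed but not substantively different.
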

\begin{proof}
Explicit positive expression for $\nu_1$, and $\nu_2$ immediate follow from equations \eqref{nu_2_cond} and \eqref{nu_1_cond}.
\end{proof}

In the next section we consider case 1 (subsection \ref{Case 1}) and investigate the stability of the time and spatial dependent equilibrium state vector $\vec{u}_0(z,t)$ defined in \eqref{u0-def} for algae and \textsl{Daphnia} using the energy method technique.

%In the Section \ref{stability}. we will investigate stability of a space time depending equilibrium  state vector $\vec{u}_0(z,t)$ defined in \eqref{u0-def}, with predefined behavior on the boundary $z=0$ and $z=-H$, which subject to parameter of the system.  

\subsection{Stability of the time and spatial dependent solution}\label{sec:stability}
In this section we conduct a Tuning stability analyses of our time and spatial dependent equilibrium, with respect to our reaction diffusion-chemo-tactic  system, which for convenience we rewrite here. First, we define functions in the right hand side (RHS) as reactive terms: 
\begin{align*}
\Phi_1(U,V,C)&=F_1(U,V,C)U \\
\Phi_2(U,V,C)&=F_2(U,V,C)V\\
\Phi_3(U,V,C)&=F_3(U,V,C)C 
\end{align*}
and function $X(U,V,\vec{\xi})$  in the left hand side (LHS) of equation \eqref{pred-eq-1-mod-F22} as a chemotactic term. For convenience, we keep the linear part of the diffusion terms in the LHS of each equation. Our system can then be written as:

\begin{align}\label{prey-eq-1-mod-F11a}
&\frac{\partial U}{\partial t} -\big[\nabla\left(D_1\cdot\nabla U\right) -v_0\partial_z U  \big]=F_1(U,V,C)\cdot U\\
\label{pred-eq-1-mod-F22a}
&\frac{\partial V}{\partial t}- \left[\nabla\left(D_2\cdot\nabla V\right) - \chi_0\nabla \cdot \left( \vec{X}(U,V,\nabla U)\right)\right]=F_2(U,V,C)V\\
\label{Desoled-tox-mod-F33a}
&\frac{\partial C}{\partial t}-\nabla\left(D_C\cdot\nabla C\right) =F_3(U,V,C)\cdot C
\end{align}

Once again, we denote vector $\vec{U}(z,t)=\big(U,V,C\big)^T$ as an arbitrary solution of the system and  $\vec{u}_0(z,t)=\big(u_e,v_e,c_e\big)^T$ as the time and spatially dependent equilibrium state. Here, our Turing stability analyses is based on the Carath{\'e}odory principle, where the Carath{\'e}odory theorem for functions $\Phi_i(U,V,C)$ takes the form: 
\begin{align}\label{carath1}
\Phi_i(U,V,C)- \Phi_i(u_e,v_e,c_e)&=\psi^1_i(U,V,C)(U-u_e)+\psi^2_i(U,V,C)(V-v_e)\\
&\qquad+\psi^3_i(U,V,C)(C-c_e). \nonumber
\end{align}
In above in the Carath{\'e}odory functions, $\psi_i^k$, the upper indices $k=1,2,3$ correspond with variables $U$, $V$, and $C$ respectively, and the lower indices $i=1,2,3$ correspond with the reactive term functions $F_1$, $F_2$, and $F_3$ respectively. We denote the following limits: 
\begin{align}
& \psi^1_{i,0}(z,t)=\lim_{\vec{U}\to \vec{u}_0} \psi^1_i(\vec{U})=\left.\frac{\partial \left( F_i(\vec{U})\cdot U\right) }{\partial U}\right\vert_{\vec{U}=\vec{u}_0} \label{F_U}\\
& \psi^2_{i,0}(z,t)=\lim_{\vec{U}\to \vec{u}_0} \psi^2_i(\vec{U})=\left.\frac{\partial \left( F_i(\vec{U})\cdot V\right) }{\partial V}\right\vert_ {\vec{U}=\vec{u}_0}\label{F_V}\\
& \psi^3_{i,0}(z,t)=\lim_{\vec{U}\to \vec{u}_0} \psi^3_i(\vec{U})=\left.\frac{\partial \left( F_i(\vec{U})\cdot C\right) }{\partial C}\right\vert_ {\vec{U}=\vec{u}_0} \label{F_C}.
\end{align}
Now, applying the Carath{\'e}odory Theorem to the non-linear chemotactic vector term $\vec{X}(U,V,\nabla) U=\frac{V\nabla U}{U}$ yields
\begin{align}\label{carath2}
\vec{X}(U,V,\vec{\xi})- \vec{X}(u_e,v_e,\vec{\xi}_e)&=\vec{\chi}^1(U,V,\vec{\xi})(U-u_e)+\vec{\chi}^2(U,V,\vec{\xi})(V-v_e)\\
&\qquad+\tilde{\chi}^3(U,V,\vec{\xi})(\vec{\xi}-\vec{\xi}_e).\nonumber
\end{align}
Here functions $\vec{X}(U, V, \vec{\xi})$ and $\vec{\chi}^k(U, V, \vec{\xi})$ for $k=1, 2$ are vectors and 
$\tilde{\chi}^3(U,V,\vec{\xi})$ is a diagonal matrix. Note that $\vec{\xi}=\nabla U$. The functions have the following properties:

 %in the multidimensional  Carath{\'e}odory representation formula \eqref{carath2}  (note that $\vec{\xi}=\nabla U$)  and  they have  properties 

\begin{align}
&\vec{\chi}^1_0(z,t)= \lim_{\nabla U\to \nabla u_0 }\lim_{\vec{U}\to \vec{u}_0} \vec{\chi}^1(U,V,\nabla U) = \left.\left(-\frac{V\nabla U}{U^2} \right)\right\vert_ {\vec{U}=\vec{u}_0; \nabla U= \nabla u_e } \label{chi_U}\\
&\vec{\chi}^2_0(z,t)=\lim_{\nabla U\to \nabla u_0 }\lim_{\vec{U}\to \vec{u}_e} \vec{\chi}^2(U,V,\nabla U)= \left.\left(\frac{\nabla U}{U}\right) \right\vert_ {U=u_0; \nabla U= \nabla u_e }  \label{chi_V}\\
&\tilde{\chi}^3_0(z,t)=\lim_{\nabla U\to \nabla u_0 } \lim_{\vec{U}\to \vec{u}_e} \tilde{\chi}^3(U,V,\nabla U) =\left.\frac{U}{V}\mathbb{I}\right\vert_ {\vec{U}=\vec{u}_0} \label{chi_nabla}
\end{align}
where $\mathbb{I}$ is the identity matrix. 
Note that in above \eqref{chi_V} in the RHS does not contain $V$ and therefore $U$ and $u_e$ are just scalars. 

\begin{rem}
Note that the equilibrium state $\vec{u}_0 =(u_e(z,t),v_e(z,t),c_e=0)$ depends on two variables, time $t$ and depth $z$, in the generic light-driven model. Thus all of the parameters of  $\psi^k_{i,0}$ and $\chi^k_0$ are functions of time $t$ and depth $z$.  Using the Carath{\'e}odory theorem for the linearization process gives a linear system of equations for $\vec{u}=\big(U-u_e,V-v_e,C-c_e\big)$, whose coefficients generally depend on $t$ and $z$.
\end{rem}

For convenience, let
\begin{equation}\label{uvc} 
u=U-u_e, \,\ v=V-u_e, \, \text{and}\ c=C-c_e 
\end{equation}
then the system can be written in the following form:
\begin{align}
\label{prey_ln}
\frac{\partial u}{\partial t}& -\big[\nabla\left(D_1\cdot\nabla u\right) -v_0\partial_z u  \big]=\psi^1_{1,0}(z,t)u+\psi^2_{1,0}(z,t)v+\psi^3_{1,0}(z,t)c   \\
\label{pred_ln}
\frac{\partial v}{\partial t} &- \nabla\left(D_2\cdot\nabla v\right) + \chi\nabla\cdot \left( \vec{\chi}^1_0(z,t)u+\vec{\chi}^2_0(z,t)v+\tilde{\chi}^3_0(z,t)\cdot(\nabla u)\right) \\
&\qquad=\psi^1_{2,0}(z,t)u+\psi^2_{2,0}(z,t)v+\psi^3_{2,0}(z,t)c.\nonumber\\
\label{toxin_ln}
\frac{\partial c}{\partial t} &-\nabla\left(D_C\cdot\nabla c\right)=\psi^1_{3,0}(z,t)u+\psi^2_{3,0}(z,t)v+\psi^3_{3,0}(z,t)c. 
\end{align}

We assume that perturbations from the equilibrium state only occur at the initial conditions, therefore $u(x,0)\neq 0$,  $v(x,0)\neq 0$, and $c(x,0)\neq 0$. 
Additionally $c(x,0)>0$ whether the perturbations in the prey ($u$) and in predator ($v$) are positive or negative. We assume that the population densities and the concentration of toxins on the top and bottom boundaries match those of the equilibrium state. Therefore perturbations of $u$, $v$, and $c$ are assumed to be zero on the top and bottom boundaries. 
Additionally, we assume that the side boundary of the cylindrical domain serves as an insulator and take the fluxes on the side boundary to be zero. 
% Next we assume that side boundary of the the cylindrical domain serves as inhalator. This mean that fluxes on the side boundary are equal zero. 

These assumptions on the initial and boundary conditions of the linearized system \eqref{prey_ln}-\eqref{toxin_ln} are summarized below:
%Let us adjust above assumptions as initial and boundary condition for the linearised  system \eqref{prey_ln}-\eqref{toxin_ln}
\begin{align}
\label{IC-lin}
&u(x,0)=u^0(x), \ v(x,0)=v^0(x) ,\  c(x,0)=c^0(x) \\ 
\label{Neum-cond-lin}
&\frac{\partial u }{\partial \vec{n}}= \frac{\partial v }{\partial \vec{n}}=\frac{\partial c }{\partial \vec{n}}=0  \ \text {on side boundary } \Gamma_1\times (0,\infty) \\
\label{top-cond}
&u(x_1,x_2, 0,t)=  v(x_1,x_2, 0,t)=c(x_1,x_2, 0,t)=0 \ \text{on} \ \Gamma^0\times (0,\infty)  \\
\label{tbotom-cond}
& u(x_1,x_2, -H,t)=v(x_1,x_2, -H,t)= c(x_1,x_2, -H,t)=0 \ \text{on} \ \Gamma^H\times (0,\infty)
\end{align}
The following proposition is a straight forward application of Green's formula.   
\begin{prop}
Green's formula yields the following integral identity for the components of vector $\vec{u}$ and the gradients of these components: 
\begin{align}\label{main-identity} 
&\frac{d}{dt}\left(\int_{\Omega} (u^2(x,t)+v^2(x,t)+ c^2(x,t)) dx\right)+\\ \nonumber
&\int_{\Omega}\left(D_1 (\nabla u)^2 +D_2\nabla v)^2  + D_3 (\nabla c)^2 \right) dx=\\ \nonumber 
&\int_{\Omega} \psi^1_{1,0}(x,t) u^2+\psi^2_{1,0}(x,t) uv+\psi^3_{1,0}(x,t) cu dx+\\ \nonumber
&\int_{\Omega}\psi^1_{2,0}(x,t) uv+\psi^2_{2,0}(x,t) v^2+\psi^3_{2,0}(x,t) cv dx+\\ \nonumber
&\int_{\Omega}\psi^1_{3,0}(z,t) uc+\psi^2_{3,0}(x,t) vc+\psi^3_{3,0}(x,t) c^2dx+\\ \nonumber
&\int_{\Omega}\chi\left[\left(\vec{\chi}^1_0(z,t)\cdot \nabla v\right) u+\left(\vec{\chi}^2_0(z,t)\cdot \nabla v\right) v+\left( \tilde{\chi}^3_0(z,t) \nabla u\right)\cdot \nabla v\right]dx. \nonumber
\end{align}
\end{prop}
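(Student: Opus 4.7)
The plan is to derive the identity by a standard energy-type computation: take each linearized equation and test it against the corresponding unknown. Specifically, I would multiply \eqref{prey_ln} by $2u$, \eqref{pred_ln} by $2v$, and \eqref{toxin_ln} by $2c$, integrate each resulting equation over the cylindrical domain $\Omega$, and finally add the three integrated identities. The factor of $2$ is chosen so that the time derivative on the left combines as $2u\,u_t = \partial_t(u^2)$, giving $\frac{d}{dt}\int_\Omega u^2\,dx$ (and analogously for $v$ and $c$), which matches the leading term of the stated identity.

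Next I would handle the linear second-order terms by Green's formula. For the pure diffusion terms $\nabla\cdot(D_i\nabla\,\cdot)$, integration by parts produces the positive Dirichlet energies $\int_\Omega D_i(\nabla\,\cdot)^2\,dx$ appearing on the left-hand side, together with boundary integrals of the form $\int_{\partial\Omega} (\text{unknown})\,D_i\,\partial_\nu(\text{unknown})\,dS$. These boundary integrals vanish: on the top and bottom $\Gamma^0 \cup \Gamma^H$ the perturbations themselves vanish by the Dirichlet conditions \eqref{top-cond}–\eqref{tbotom-cond}, and on the lateral boundary $\Gamma_1$ the normal derivatives vanish by the Neumann condition \eqref{Neum-cond-lin}. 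The light-driven convection term in \eqref{prey_ln} is dealt with similarly: $2u\cdot(-v_0\partial_z u) = -v_0\partial_z(u^2)$, and
\begin{equation*}
-v_0\int_\Omega \partial_z(u^2)\,dx = -v_0\int_{\Gamma^0\cup\Gamma^H} u^2\, n_z\,dS = 0,
\end{equation*}
again because $u$ vanishes on the top and bottom. Hence the convective term contributes nothing to the identity.

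The chemotactic flux in \eqref{pred_ln} is the most substantive piece. After testing with $2v$ and integrating by parts once,
\begin{equation*}
\int_\Omega 2v\,\chi\,\nabla\cdot\bigl(\vec{\chi}^1_0 u + \vec{\chi}^2_0 v + \tilde{\chi}^3_0\nabla u\bigr)\,dx
= -2\chi\int_\Omega \bigl(\vec{\chi}^1_0 u + \vec{\chi}^2_0 v + \tilde{\chi}^3_0\nabla u\bigr)\cdot\nabla v\,dx,
\end{equation*}
where the boundary term again vanishes for the same reason as above (the Dirichlet condition kills it on $\Gamma^0\cup\Gamma^H$, and the scalar product with the normal vanishes on $\Gamma_1$). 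Moving this term to the right-hand side of the energy identity produces precisely the last line of the proposition, with the three distinct contributions $(\vec{\chi}^1_0\cdot\nabla v)u$, $(\vec{\chi}^2_0\cdot\nabla v)v$, and $(\tilde{\chi}^3_0\nabla u)\cdot\nabla v$.

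Finally, the nonlinear reactive terms on the right of \eqref{prey_ln}–\eqref{toxin_ln} are already expressed in linearized form via the Carathéodory representation \eqref{carath1}, so testing against $u$, $v$, $c$ and adding produces the nine $\psi^k_{i,0}$ contributions stated in the proposition. No deep obstacle arises; the only matter requiring care is to verify systematically that \emph{every} boundary integral generated by the integration by parts vanishes — the piece on the top/bottom through the Dirichlet conditions on the perturbations, and the piece on the side through the Neumann conditions — so that the left-hand side contains only the clean time-derivative and Dirichlet-energy terms as written. Once this bookkeeping is done, summing the three tested equations yields \eqref{main-identity} directly.
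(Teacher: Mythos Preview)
Your proposal is correct and follows exactly the approach of the paper: multiply each of the three linearized equations by the corresponding unknown, integrate over $\Omega$, apply Green's formula, and check that all boundary integrals vanish by the Dirichlet conditions on $\Gamma^0\cup\Gamma^H$ and the Neumann conditions on $\Gamma_1$. The paper's own proof is in fact considerably terser than yours --- it simply states that one multiplies by $u$, $v$, $c$, applies Green's formula, and that the boundary terms vanish --- so your more careful bookkeeping of the convective and chemotactic terms is, if anything, an improvement in exposition.
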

\begin{proof}
To obtain the integral identity \eqref{main-identity} we multiply the first equation by $u(x,t)$, the second equation  by $v(x,t)$, and the third equation by $c(x,t)$ and then apply Green's formula. All boundary term vanish due to the boundary conditions \eqref{Neum-cond-lin}-\eqref{tbotom-cond}.
\end{proof}

In order to investigate the stability of the equilibrium state, we explore whether the system will return to the equilibrium state after a given perturbation. Assume that a perturbation of the equilibrium state occurs at the initial time, $t=0$. The stability properties, or whether the system will return to the equilibrium state can depend on the geometry of the aquifer, as well as the motility of the algae and \textsl{Daphnia}. The question we focus on here is: Can the geometrical depth of the aquifer and/or the migration capabilities of the populations, with respect to the toxic coefficient and reactive terms, be such that the system will be restored to the equilibrium state? 

Mathematically speaking for any $\bold{``not \ too \ big"}$  perturbation of the system, can we determine large enough motility coefficients and/or a small enough reservoir depth such that the system stabilized in time to $\vec{u}_0$? In the Turing stability analyses,  $\bold{``not \ too  \ big"}$ means that the perturbation is such that oscillations of the nonlinear functions $F^s$ and $X^s$ near equilibrium are limited so that the functions $F^s(\vec{U})$ and $X^s(\vec{U})$  can be effectively quantified by values of these functions at equilibrium  $\vec{u}_0 .$

The answer to this question is positive in the following sense: for any value of the chemotactic functions $X^s$ and reactive functions $F^s$ at equilibrium state $\vec{u}_0$, there exists diffusion coefficients $D_1$, $D_2$, and $D_3$ such that the perturbed solution will stabilize to the equilibrium state, as long as the perturbation is $\bold{``not \ too \ big"}$. These ideas are summarized in the following theorems. 

\begin{theorem}\label{estimate-wrt-in-data}
Let
\begin{align} 
\psi^1_1&=\max |\psi^1_{1,0}(z,t)| \ ; \ \psi^1_2=\max |\psi^1_{2,0}(z,t)| \ ; \  \psi^1_3=\max |\psi^1_{3,0}(z,t)| \label{psi^1},  \\
\psi^2_1&=\max |\psi^2_{1,0}(z,t)| \ ; \ \psi^2_2=\max |\psi^2_{2,0}(z,t)| \ ; \ \psi^2_3=\max |\psi^2_{3,0}(z,t)| \label{psi^2},\\
\psi^3_1&=\max |\psi^3_{1,0}(z,t)| \ ; \ \psi^3_2=\max |\psi^3_{2,0}(z,t)|  \ ; \  \psi^3_3=\max |\psi^3_{3,0}(z,t)|\label{psi^3},
\end{align}
and
\begin{equation}\label{chi}
\chi^1=\chi \max |\vec{\chi}^1_0| \ ; \ \chi^2=\chi \max |\vec{\chi}^2_0|\ ; \ \chi^3=\chi \max ||\tilde{\chi}^3_0||.
\end{equation}
In above equations the $\max$ is taken w.r.t $z\in [-H,0]$ and $t\in [0,\infty)$. Due to properties of the equilibrium state $\vec{u}_0$, these maximums are positive and bounded constants.
Then due to the linearity of the solution to the linearized system  \eqref{prey_ln}-\eqref{toxin_ln}, the following inequality holds
\begin{align}\label{main_int_ineq}
&\frac{d}{dt}\int_{\Omega} \left(u^2+v^2+ c^2\right) dx\\
&\qquad+\int_{\Omega}\left[\left(\frac{D_1}{2}-\frac{\chi^3}{2}\right) \cdot (\nabla u)^2 +\left(\frac{D_2}{2}-\frac{\chi^1+\chi^2+\chi^3}{2}\right) \cdot (\nabla v)^2\right]dx \nonumber\\
&\qquad+\int_{\Omega}\left[\left(\frac{D_2}{2}-\frac{H^2\chi^2}{2}\right) \cdot (\nabla v)^2+\left(\frac{D_1}{2}-\frac{H^2\chi^1}{2}\right) \cdot (\nabla u)^2 \right]+\frac{D_3}{H^2}  c^2dx\nonumber\\ 
&\leq\int_{\Omega}\left( \psi^1_1+\frac{\psi^2_1+\psi^3_1+\psi^1_2+\psi^3_2 +\psi^1_3}{2}\right)u^2dx\nonumber\\
&\qquad+\int_{\Omega}\left(\psi^2_2+\frac{\psi_1^2+\psi_2^1+\psi_3^2+\psi^2_3}{2} \right)v^2 dx\nonumber\\
&\qquad+\int_{\Omega}\left(\psi^3_3+\frac{\psi_3^1+\psi_3^2+\psi^3_1+\psi^2_3}{2} \right)c^2 dx.\nonumber
\end{align}
 
\end{theorem}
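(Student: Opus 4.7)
My plan is to derive the claimed estimate directly from the Green's-formula identity \eqref{main-identity} of the preceding proposition by first bounding each time- and space-dependent coefficient on the RHS by its uniform maximum ($\psi^j_i$ from \eqref{psi^1}--\eqref{psi^3} and $\chi^k$ from \eqref{chi}), and then using two elementary tools: Cauchy's inequality with weight one on every cross product, and the Poincar\'e inequality in the $z$-variable, which is available because $u$, $v$, and $c$ all vanish on $\Gamma^0\cup\Gamma^H$ by \eqref{top-cond}--\eqref{tbotom-cond}.

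For the nine reactive terms on the RHS of \eqref{main-identity}, the three pure-squared contributions $\psi^1_{1,0}u^2$, $\psi^2_{2,0}v^2$, and $\psi^3_{3,0}c^2$ pass through unchanged up to replacing each coefficient by its max, while each of the six cross products $\psi^j_{i,0}\cdot\{uv,uc,vc\}$ is split by $|ab|\le \frac12(a^2+b^2)$ and distributed equally between the two variables involved. Collecting contributions produces the three RHS coefficients of $u^2$, $v^2$, $c^2$ recorded in the theorem.

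The chemotactic integral $\int_\Omega\chi\bigl[(\vec\chi^1_0\cdot\nabla v)u+(\vec\chi^2_0\cdot\nabla v)v+(\tilde\chi^3_0\nabla u)\cdot\nabla v\bigr]\,dx$ requires three separate treatments. For the diagonal piece, Cauchy yields $\frac{\chi^3}{2}(|\nabla u|^2+|\nabla v|^2)$, which is absorbed into the existing $D_1|\nabla u|^2$ and $D_2|\nabla v|^2$ on the LHS. For the $\chi^1$ piece, Cauchy produces $\frac{\chi^1}{2}(u^2+|\nabla v|^2)$; the $|\nabla v|^2$ part is again absorbed into $D_2|\nabla v|^2$, while the $u^2$ part is converted via Poincar\'e $\int u^2\le H^2\int|\nabla u|^2$ and then absorbed into $D_1|\nabla u|^2$ at cost $H^2\chi^1/2$. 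The $\chi^2$ piece is handled symmetrically via $\int v^2\le H^2\int|\nabla v|^2$. Pre-splitting $D_1\int|\nabla u|^2$ and $D_2\int|\nabla v|^2$ each into two equal halves before carrying out these absorptions produces exactly the four diffusion--chemotaxis groupings visible on the LHS of the target estimate. Finally, since no $|\nabla c|^2$ term appears on the LHS of the claim, the entire $D_3\int|\nabla c|^2$ is converted by Poincar\'e to $\frac{D_3}{H^2}\int c^2$, yielding the last term.

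I expect the main obstacle to be purely one of signed bookkeeping rather than any conceptual difficulty. Each absorption reverses the direction of an inequality (for example, from $\frac{\chi^1}{2}\int u^2\le \frac{H^2\chi^1}{2}\int|\nabla u|^2$ one passes to $-\frac{\chi^1}{2}\int u^2\ge -\frac{H^2\chi^1}{2}\int|\nabla u|^2$ before subtracting from the LHS), so it is easy to collect a cross term against the wrong variable or to misplace a factor of two. Careful tabulation of which Cauchy split feeds which $u^2/v^2/c^2$ coefficient, together with verification that the two halves of $D_1$ and of $D_2$ are apportioned so that both the gradient-to-gradient absorption and the mass-via-Poincar\'e absorption succeed with the coefficients stated, is what remains.
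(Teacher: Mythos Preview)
Your proposal is correct and matches the paper's approach exactly: the paper does not give a formal proof of this theorem but simply remarks, immediately after the statement, that the factor $H^2$ comes from the Poincar\'e inequality (available since the functions vanish on the top and bottom of the reservoir) and that the factors of $\tfrac12$ come from applications of the standard Cauchy inequality. Your write-up is a faithful and more detailed expansion of precisely that sketch, starting from the Green's-formula identity \eqref{main-identity} of the preceding proposition.
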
 
In above  $H^2$ is a constant in the Poincare Inequality: $\int_{\Omega} u^2 dx\leq H^2 \int_{\Omega} |\nabla u|^2 dx$ for the class of $W_2^1$ which are vanishing on top and the bottom of the domain (aquatic reservoir) of the depth $H$. The fraction $\frac{1}{2}$ comes from an application of the standard Cauchy inequality.  

To study stability, for compactness, we denote the terms in RHS as  
\begin{align}\label{psi-chi-u}
&q_u=\left( \psi^1_1+\frac{\psi^2_1+\psi^3_1+\psi^1_2+\psi^3_2 +\psi^1_3}{2}\right),\\
&q_v=\left(\psi^2_2+\frac{\psi_1^2+\psi_2^1+\psi_3^2+\psi^2_3}{2} \right),\label{psi-chi-v}\\
&q_c=\left(\psi^3_3+\frac{\psi_3^1+\psi_3^2+\psi^3_1+\psi^2_3}{2} \right). \label{psi-chi-c}
\end{align}
Using the above differential inequality \eqref{main_int_ineq}, one can easily get stability in finite time w.r.t. perturbation in the initial data. 

\begin{theorem}\label{estimates-I-alpha}
Let $\vec{u}$ be a solution of the linearized system \eqref{prey_ln}-\eqref{toxin_ln}. Assume that 
the coefficients of the equations and depth of the reservoir $H$ are such that
\begin{align}\label{Du-chi-cond}
&Q_u=\left(D_1-\frac{\chi^3}{2}-\frac{H^2\chi^1}{2}\right)>0,\\
&Q_v=\left(D_2-\frac{H^2\chi^2}{2}-\frac{\chi^1+\chi^2+\chi^3}{2}\right)>0,\label{Dv-chi-cond}\\
&Q_c=\frac{D_3}{H^2}. \label{Dc-cond}
\end{align}
Denote $I(t)=\int_{\Omega} \left(u^2+v^2+ c^2\right) dx.$
Then 
\begin{equation}\label{alpha-estimate}
I(t)\leq I(0)\cdot e^{\alpha t}.
\end{equation}
Here $\alpha$ is a constant depending on the constants $q_s$ and $Q_s$, which can be positive or negative.
\end{theorem}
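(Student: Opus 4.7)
The plan is to convert the integral inequality \eqref{main_int_ineq} from Theorem \ref{estimate-wrt-in-data} into a scalar Gronwall-type inequality for $I(t)$ by eliminating the gradient integrals on the left-hand side via the Poincar\'e inequality, and then integrating.

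First, I would combine the like gradient terms on the left of \eqref{main_int_ineq}. Summing the two coefficients attached to $(\nabla u)^2$ yields $\frac{D_1}{2}-\frac{\chi^3}{2}+\frac{D_1}{2}-\frac{H^2\chi^1}{2}=Q_u$, summing the two coefficients attached to $(\nabla v)^2$ yields $Q_v$, and the isolated $c^2$ term carries coefficient $Q_c=D_3/H^2$, so \eqref{main_int_ineq} collapses to
\begin{equation*}
I'(t)+\int_\Omega \left(Q_u(\nabla u)^2+Q_v(\nabla v)^2+Q_c\, c^2\right)dx \le \int_\Omega \left(q_u\, u^2+q_v\, v^2+q_c\, c^2\right)dx.
\end{equation*}

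Next, since $u$ and $v$ vanish on $\Gamma^0\cup\Gamma^H$ by the boundary conditions \eqref{top-cond}--\eqref{tbotom-cond}, the Poincar\'e inequality applies with the same constant $H^2$ invoked in Theorem \ref{estimate-wrt-in-data}, giving $\int_\Omega u^2\,dx\le H^2\int_\Omega|\nabla u|^2\,dx$ and the analogous bound for $v$. Combined with the standing hypotheses $Q_u,Q_v>0$, this produces $Q_u\int|\nabla u|^2\,dx\ge \frac{Q_u}{H^2}\int u^2\,dx$ and the companion bound for $v$. Substituting into the previous display and moving the $Q_\bullet$-terms to the right yields
\begin{equation*}
I'(t)\le \int_\Omega \left[\left(q_u-\frac{Q_u}{H^2}\right)u^2+\left(q_v-\frac{Q_v}{H^2}\right)v^2+(q_c-Q_c)c^2\right]dx.
\end{equation*}
Setting $\alpha=\max\left\{q_u-\frac{Q_u}{H^2},\ q_v-\frac{Q_v}{H^2},\ q_c-Q_c\right\}$ gives the scalar inequality $I'(t)\le \alpha I(t)$, and Gronwall's lemma immediately yields $I(t)\le I(0)e^{\alpha t}$. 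The sign of $\alpha$ is not forced by the argument, but when the diffusion/depth balance is strong enough to make each bracketed difference negative, $\alpha<0$ and one recovers genuine exponential decay back to the periodic equilibrium state, which is the Turing-stable regime advertised in the introduction.

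The main obstacle is really just cosmetic bookkeeping: one must verify that the coefficients of $(\nabla u)^2$ and $(\nabla v)^2$ in \eqref{main_int_ineq} do aggregate exactly to $Q_u$ and $Q_v$ as defined in \eqref{Du-chi-cond}--\eqref{Dv-chi-cond}, and that the Poincar\'e constant implicitly used to convert $D_3(\nabla c)^2$ into $\frac{D_3}{H^2}c^2$ during the derivation of \eqref{main_int_ineq} is the same one reapplied here for $u$ and $v$. Both points follow at once from the Dirichlet conditions on $\Gamma^0\cup\Gamma^H$ and from the boundedness of $\vec{u}_0$ guaranteed by Theorem \ref{gamma-neq-0-omega-neq-0} (which in turn ensures that the constants in \eqref{psi^1}--\eqref{chi} are finite), so beyond Theorem \ref{estimate-wrt-in-data} only a one-line Gronwall argument is required.
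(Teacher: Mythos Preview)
Your proposal is correct and matches the paper's own treatment: the paper does not give a formal proof but simply remarks that ``using the above differential inequality \eqref{main_int_ineq}, one can easily get stability in finite time w.r.t.\ perturbation in the initial data,'' and your argument is exactly the routine Poincar\'e-plus-Gronwall computation that this sentence leaves implicit. Your bookkeeping that the gradient coefficients in \eqref{main_int_ineq} aggregate to $Q_u$, $Q_v$, $Q_c$ is accurate.
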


If diffusion and the size of the domain weakly dominate the chemotactic and reactive forces, then the $L_2$ Lyapunov functional of the deviation between the perturbed solution and the equilibrium state solution, $I(t)$,  does not exceed the initial perturbation.

\begin{theorem}\label{Stability}
Let $\vec{u}$ be a solution of the linearized system \eqref{prey_ln}-\eqref{toxin_ln}. Assume that
the coefficients of the equations and depth of the reservoir $H$ are such that
\begin{align}\label{Du-chi-conda}
&Q_u\geq q_u,\\
&Q_v\geq q_v,\label{Dv-chi-conda}\\
&Q_c\geq q_c. \label{Dc-conda}
\end{align}
Then 
\begin{equation}\label{alpha-estimatea}
I(t)\leq I(0).
\end{equation}

\end{theorem}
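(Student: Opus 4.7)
The plan is to use the integral differential inequality \eqref{main_int_ineq} of Theorem~\ref{estimate-wrt-in-data} as the starting point: under \eqref{Du-chi-conda}--\eqref{Dc-conda} I expect the LHS dissipation to dominate the RHS reaction so that $\frac{d}{dt}I(t)\le 0$, and then integration in time will deliver \eqref{alpha-estimatea}.

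The first step is bookkeeping. Collecting the coefficients of $(\nabla u)^2$, $(\nabla v)^2$, and $c^2$ on the LHS of \eqref{main_int_ineq} reproduces exactly $Q_u$, $Q_v$ defined in \eqref{Du-chi-cond}--\eqref{Dv-chi-cond} and $Q_c$ defined in \eqref{Dc-cond}, so the inequality rewrites as
\[
\frac{d}{dt}I(t)+\int_\Omega\bigl(Q_u(\nabla u)^2+Q_v(\nabla v)^2\bigr)dx+Q_c\int_\Omega c^2\,dx\le \int_\Omega\bigl(q_u u^2+q_v v^2+q_c c^2\bigr)dx.
\]
Next, because the perturbations $u$ and $v$ vanish on $\Gamma^0\cup\Gamma^H$ by \eqref{top-cond}--\eqref{tbotom-cond}, I would apply the Poincar\'e inequality stated immediately after Theorem~\ref{estimate-wrt-in-data}, which yields $\int_\Omega Q_u(\nabla u)^2\,dx\ge H^{-2}Q_u\int_\Omega u^2\,dx$ and the analogous bound for $v$. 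The hypotheses $Q_u\ge q_u$, $Q_v\ge q_v$, $Q_c\ge q_c$, combined with the $H^{-2}$ Poincar\'e factor already absorbed in $Q_c=D_3/H^2$ and in the $H^2\chi^1/2$, $H^2\chi^2/2$ terms carried inside $Q_u$, $Q_v$ from the derivation of \eqref{main_int_ineq}, then imply that the RHS reaction integral is absorbed into the LHS dissipation, giving $\frac{d}{dt}I(t)\le 0$ on each open diel subinterval $((j-1)T,jT)$.

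Integrating this differential inequality from $0$ to $t$ on each such subinterval, and using the $C^{1,0}_{x,t}$ continuity of $\vec u$ across the switching instants (assumed in the classical-solution setup of Section~\ref{sec:main}), concludes $I(t)\le I(0)$ for every $t\ge 0$. The main obstacle I anticipate is precisely the Poincar\'e bookkeeping just mentioned: the constant $H^2$ enters twice---once in building $Q_u$, $Q_v$, $Q_c$ out of the Cauchy--$\varepsilon$ and Poincar\'e estimates inside Theorem~\ref{estimate-wrt-in-data}, and again when comparing gradient dissipation to $L^2$ norms---and one must verify that the two uses combine cleanly with the hypotheses in the form $Q_s\ge q_s$ as stated. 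Once this accounting is settled, boundedness and periodicity of the equilibrium state $\vec u_0$ from Theorem~\ref{gamma-neq-0-omega-neq-0} ensure that the coefficient maxima \eqref{psi^1}--\eqref{chi} are finite and attained, and the remainder of the argument is the textbook Lyapunov-functional comparison with no further regularity subtleties.
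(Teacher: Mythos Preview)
Your approach is exactly the one the paper intends: the paper gives no separate proof of this theorem and simply presents it (together with Theorems~\ref{estimates-I-alpha} and~\ref{assymp-stability}) as an immediate consequence of the differential inequality \eqref{main_int_ineq}. Your bookkeeping showing that the gradient coefficients on the LHS of \eqref{main_int_ineq} sum to $Q_u$, $Q_v$ and the $c^2$ coefficient is $Q_c$ is correct, and the rewritten inequality you display is the right starting point.

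You are also right to flag the Poincar\'e accounting, but your resolution of it is not sound. After Poincar\'e, the dissipation contributes $H^{-2}Q_u\int u^2$ and $H^{-2}Q_v\int v^2$, not $Q_u\int u^2$ and $Q_v\int v^2$. The $H^2$ factors already sitting inside $Q_u$ and $Q_v$ (from the $H^2\chi^1/2$ and $H^2\chi^2/2$ terms) arose from a \emph{separate} Poincar\'e step used in deriving \eqref{main_int_ineq} to push lower-order Cauchy terms up to gradient level; they do not cancel the new $H^{-2}$ you incur when pushing the gradient dissipation back down. So from the hypotheses $Q_u\ge q_u$, $Q_v\ge q_v$ as literally stated you only obtain $\tfrac{d}{dt}I(t)\le 0$ if in addition $H\le 1$ (or, more honestly, the intended hypotheses should read $Q_u/H^2\ge q_u$, $Q_v/H^2\ge q_v$, while $Q_c\ge q_c$ is already correctly normalized since $Q_c=D_3/H^2$). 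This is an imprecision in the paper's statement rather than a defect in your method; once the conditions are read with the missing $H^{-2}$ factor, your argument and the paper's coincide and the conclusion $I(t)\le I(0)$ follows as you describe.
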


If diffusion and the size of the domain strongly dominate the chemotactic and reactive forces, then the $L_2$ Lyapunov functional of the deviation between the perturbed solution and the equilibrium state solution, $I(t)$, asymptotically vanishes with time.

\begin{theorem}\label{assymp-stability}
Let $\vec{u}$ be a solution of the linearized system \eqref{prey_ln}-\eqref{toxin_ln}. Assume that
the coefficients of the equations and depth of the reservoir $H$ are such that
\begin{align}\label{Du-chi-condb}
&Q_u> q_u,\\
&Q_v>q_v,\label{Dv-chi-condb}\\
&Q_c > q_c. \label{Dc-condb}
\end{align}
Then 
\begin{equation}\label{alpha-estimateb}
I(t)\leq I(0)\cdot e^{-\beta_0 t}.
\end{equation}
Here $\beta_0>0$ is a constant depending on  $q_s$ and $Q_s$.
\end{theorem}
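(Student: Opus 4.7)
The plan is to deduce Theorem \ref{assymp-stability} directly from the integral inequality \eqref{main_int_ineq} of Theorem \ref{estimate-wrt-in-data}, using the strict dominance hypotheses $Q_u>q_u$, $Q_v>q_v$, $Q_c>q_c$ together with the Poincar\'e inequality to convert the gradient dissipation into $L^2$ dissipation. First, I would collect the two separate $(\nabla u)^2$ contributions and the two separate $(\nabla v)^2$ contributions on the left-hand side of \eqref{main_int_ineq} so that the inequality reads
\begin{equation*}
\frac{dI}{dt} + Q_u\!\int_\Omega |\nabla u|^2\,dx + Q_v\!\int_\Omega |\nabla v|^2\,dx + Q_c\!\int_\Omega c^2\,dx \le q_u\!\int_\Omega u^2 + q_v\!\int_\Omega v^2 + q_c\!\int_\Omega c^2\,dx,
\end{equation*}
which is exactly the ``canonical'' energy inequality behind Theorems \ref{estimates-I-alpha}--\ref{Stability}.

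Next, because $u$ and $v$ vanish on $\Gamma^0\cup\Gamma^H$ by \eqref{top-cond}--\eqref{tbotom-cond}, I would invoke the Poincar\'e inequality stated after Theorem \ref{estimate-wrt-in-data}, namely $\int_\Omega w^2\,dx \le H^2 \int_\Omega |\nabla w|^2\,dx$ for $w\in\{u,v\}$, to lower bound $Q_u\int|\nabla u|^2 \ge (Q_u/H^2)\int u^2$ and similarly for $v$. Substituting and rearranging yields
\begin{equation*}
\frac{dI}{dt} \le -\Bigl(\tfrac{Q_u}{H^2}-q_u\Bigr)\!\int_\Omega u^2 - \Bigl(\tfrac{Q_v}{H^2}-q_v\Bigr)\!\int_\Omega v^2 - (Q_c-q_c)\!\int_\Omega c^2.
\end{equation*}
Under the strict inequalities \eqref{Du-chi-condb}--\eqref{Dc-condb} (interpreted with the Poincar\'e constant $H^2$ implicit in the dissipative coefficients, as is done for $Q_c = D_3/H^2$ in \eqref{Dc-cond}), each of the three bracketed prefactors is strictly positive. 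Setting
\begin{equation*}
\beta_0 \;=\; \min\!\Bigl\{\tfrac{Q_u}{H^2}-q_u,\; \tfrac{Q_v}{H^2}-q_v,\; Q_c-q_c\Bigr\} \;>\;0,
\end{equation*}
the sum on the right is bounded above by $-\beta_0\bigl(\int u^2+\int v^2+\int c^2\bigr) = -\beta_0\, I(t)$.

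Finally, I would integrate the resulting scalar ODE differential inequality $\dot I(t)\le -\beta_0 I(t)$ via Gr\"onwall's lemma, which gives $I(t)\le I(0)\,e^{-\beta_0 t}$, i.e.\ \eqref{alpha-estimateb}. The computations themselves are routine; the only real subtlety, which I would flag carefully, is the book-keeping between the two ways the constant $H^2$ enters the argument: as the aquifer depth already built into the coefficients $Q_u,Q_v,Q_c$ in \eqref{Du-chi-cond}--\eqref{Dc-cond}, and as the Poincar\'e constant used to trade the gradient terms for the $L^2$ terms on the right. This is the step I expect to require the most care, and it is also the place where the strictness of the hypotheses \eqref{Du-chi-condb}--\eqref{Dc-condb} is essential: equality would only give a non-increasing $I(t)$ as in Theorem \ref{Stability}, whereas strict dominance is precisely what manufactures the positive rate $\beta_0$.
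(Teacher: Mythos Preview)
The paper does not actually supply a proof of Theorem \ref{assymp-stability}; it is stated (together with Theorems \ref{estimates-I-alpha} and \ref{Stability}) as an immediate corollary of the differential inequality \eqref{main_int_ineq}, with only the remark that ``one can easily get'' these estimates. Your approach---collect the two $(\nabla u)^2$ and two $(\nabla v)^2$ coefficients into $Q_u$ and $Q_v$, apply the Poincar\'e inequality to trade gradient dissipation for $L^2$ dissipation, and then integrate via Gr\"onwall---is precisely the intended argument and is correct.

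You are also right to flag the $H^2$ bookkeeping. As written, applying Poincar\'e to the $Q_u\int|\nabla u|^2$ and $Q_v\int|\nabla v|^2$ terms produces $\tfrac{Q_u}{H^2}\int u^2$ and $\tfrac{Q_v}{H^2}\int v^2$, so the literal hypotheses $Q_u>q_u$, $Q_v>q_v$ are not quite what the computation delivers; one needs $\tfrac{Q_u}{H^2}>q_u$ and $\tfrac{Q_v}{H^2}>q_v$. This is an ambiguity (or minor sloppiness) in the paper's own statement---note that $Q_c=D_3/H^2$ already has the Poincar\'e constant folded in, while $Q_u$ and $Q_v$ in \eqref{Du-chi-cond}--\eqref{Dv-chi-cond} do not---rather than a gap in your reasoning. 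Your proof is the right one; the caveat you raise is well taken and applies equally to the paper.
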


\section{Numerical Simulation of the non-homogeneous time dependent equilibrium state}\label{sims}
We conducted numerical simulations of system \eqref{prey-eq-1-mod}-\eqref{pred-eq-1-mod} assuming Dirichlet boundary conditions. As expected the equilibrium state corresponded with our analytical solution, since we proved the uniqueness and existence theorems (figures not shown).  Furthermore, we conducted numerical simulations of the system with no flux boundary conditions, see Fig. \ref{fig:sims}. We numerically explored the question whether the periodicity of our solution was driven by the boundary conditions, or if the coefficients alone, assuming no flux boundary conditions, results in periodicity. We were able to numerically show that a periodic pattern is preserved in time and space, even with no0flux boundary conditions, see Fig. \ref{fig:sims}. 
%We will also examined question: is boundary condition drive our solution and lead to periodicity, or  coefficient along even with non flux  boundary condition will lead to periodicity? We was able numerically    to show that periodic pattern dynamic preserved in time and space even with non-flux boundary condition when analytic solution is not available. 
\begin{figure}[H] \label{fig:sims}
\begin{center}
\includegraphics[width=0.8\textwidth]{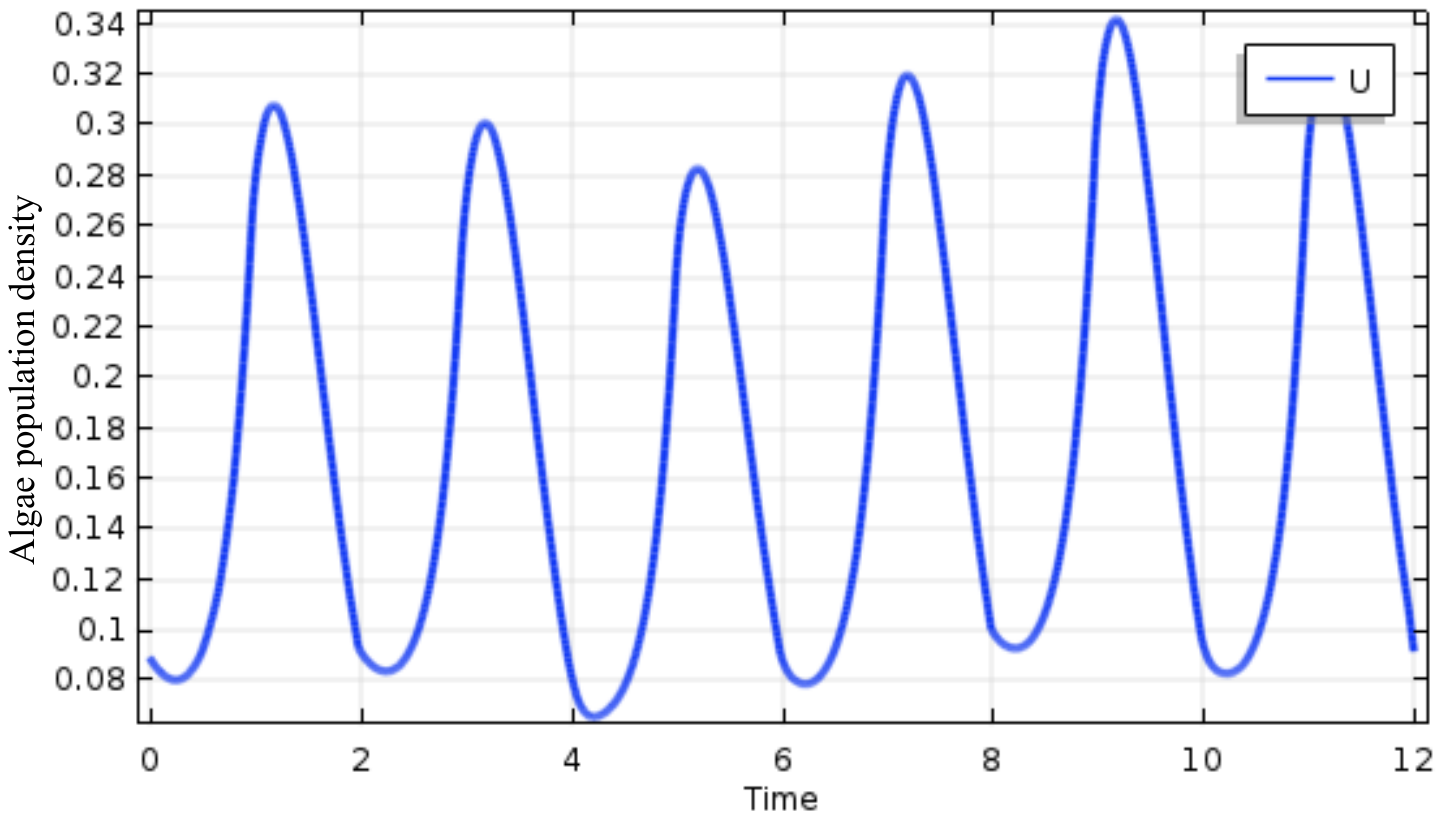}
\includegraphics[width=0.8\textwidth]{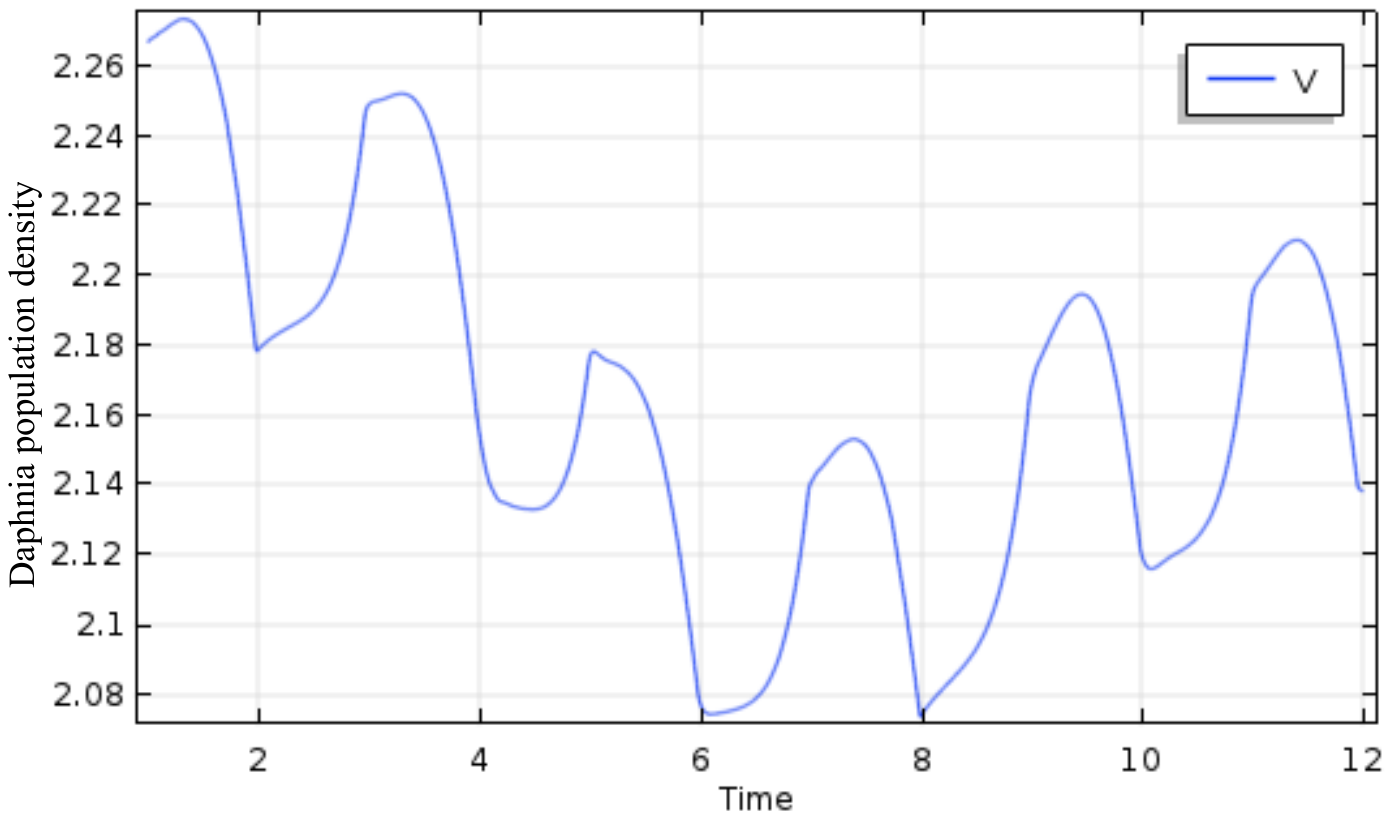}
\caption{Numerical solution of fully nonlinear equilibrium system $u_e$ Algae with no flux boundary condition of algae $u_e$ (top) and \textsl{Daphnia} $v_e$ (bottom). Simulations conducted in COMSOL.}
\end{center}
\end{figure}

%\begin{figure}[htb]
%\begin{center}
%\includegraphics[height=5in,width=6in,angle=-0]{Daphnie_Neuman_Equilebreum.pdf}
%\vspace{-1cm}
%\caption{Numerical solution of fully nonlinear equilibrium system $v_e$ Daphne with no flux boundary condition}
%\end{center}
%\end{figure}
\section{Discussion}
These modeling efforts explore an aquatic predator-prey system subject to a toxicant contaminant, whose population dynamics are driven by light. We extended previous modeling frameworks that neglect spatial dynamics and assume constant light levels. Here, we assume mortality varies with light availability, and allow the mortality rate coefficient to be subjected to switching regimes for different time intervals, depending on whether light is increasing or decreasing. For simplicity in this paper, we assume light availability is increasing on time interval $[0,1]$ and decreasing on time interval $[1,2]$, however more complex light patterns can be incorporated into the modeling framework.  The equilibrium state depends on  $\omega t$ in the boundary condition. If we ignore temporal fluctuations in light levels, the parameters of our model become time independent.

We proved the existence of an equilibrium state under certain conditions on the parameters, however if we ignore spatial dependence in the production of algae, then the equilibrium state exists in much more general cases. 
It is also worth noting that in the case of spatially homogeneous coefficients (when $\gamma=0$), a more general domain can be considered with $\Gamma_1$ being the set of $\partial{\Omega}$ such that $\Gamma_2$ is a  $C^1$ graph and the 2-D measure of $\Gamma_1$ is positive. 

While an algae-\textsl{Daphnia} predator-prey system motivated our model formulation, the resulting model is generic and can be applied to a wide variety of problems whose spatial-dependent parameters also depend on time, as periodic light driven coefficients.

\section{Conclusion}\label{sec:conclusions}
In this paper, we developed a model of light driven ecological process of the interactions between algae and \textsl{Daphnia}. We assume that some main parameters depend on time and reservoir. We proved the existence and uniqueness of spatial and temporal dependent periodic solutions (section \ref{uniqness}) and constructed analytical functions of the solutions given some constraints (section \ref{sec:analytical}). We investigated the Turing stability of the solution with respect to perturbations of initial conditions (section \ref{sec:stability}) and captured numerical simulations under more general boundary and initial conditions (section \ref{sims}). Given a perturbation to the equilibrium state solution, we showed the system will return to the equilibrium state solution as long as the motility coefficients are large enough and/or the reservoir depth is shallow enough.   
Additionally, it appears that our periodic solutions are not driven by the Dirichlet boundary conditions taken to facilitate analytical analyses, as numerical simulations assuming no flux boundary conditions exhibit similar periodic solutions.

% our numerical simulations assume no flux boundary conditions
%
%While a algae-\textsl{Daphnia} predator-prey system motivated our model formulation, the resulting model is generic and can be applied to a wide variety of problems whose spatial-dependent parameters also depend on time, as periodic light driven coefficients. 
%
%We address several questions:
%1. Existence and uniqueness of space dependent and time dependent periodical solution.
%2. Its Turing stability w.r.t. perturbation of initial data.
%3. Impact of boundary condition driven  vs coefficient driven problem.  

%\begin{rem}
% Authors want to point out that that COMSOL model for equilibrium  problem was implemented by Eugenio Aulisa, and he tuned all parameter to get all numerical result and  simulate  periodic type solution even in case when Dirichlet equilibrium was not imposed on the boundary. 
%\end{rem} 

\section*{Acknowledgments}
The authors would like to acknowledge the assistance of Dr. Eugenio Aulisa for his helpful guidance on the implementation of COMSOL used for numerical simulations. 
Author AP was partially supported by NSF grant DMS-1615697. 
\bibliographystyle{siamplain}
\bibliography{References.bbl}
\end{document}